\renewcommand{\ge}{\geqslant}
\newcommand{\ol}{\overline}
\newcommand{\eps}{\varepsilon}
\newcommand{\emp}{\emptyset}
\newcommand{\Sig}{\Sigma}
\newcommand{\noin}{\noindent}
\newcommand{\bi}{\begin{itemize}}
\newcommand{\ei}{\end{itemize}}
\newcommand{\be}{\begin{enumerate}}
\newcommand{\ee}{\end{enumerate}}
\newcommand{\bd}{\begin{description}}
\newcommand{\ed}{\end{description}}
\newcommand{\bA}{{\mathbf A}}
\newcommand{\bK}{{\mathbf K}}
\newcommand{\bmA}{\bm{A}}
\newcommand{\bmB}{\bm{B}}
\newcommand{\bmF}{\bm{F}}
\newcommand{\bmK}{\bm{K}}
\newcommand{\cA}{{\mathcal A}}
\newcommand{\cB}{{\mathcal B}}
\newcommand{\cF}{{\mathcal F}}
\newcommand{\cI}{{\mathcal I}}
\newcommand{\cK}{{\mathcal K}}
\newcommand{\cS}{{\mathcal S}}
\newcommand{\fA}{{\mathfrak A}}
\newcommand{\fB}{{\mathfrak B}}
\newcommand{\fD}{{\mathfrak D}}
\newcommand{\fN}{{\mathfrak N}}
\newcommand{\rev}{\mathbb{R}}
\newcommand{\deter}{\mathbb{D}}
\newcommand{\trim}{\mathbb{T}}
\newcommand{\qedb}{\hfill$\blacksquare$}
\title{Minimal Nondeterministic Finite Automata and Atoms of Regular Languages
\thanks{This work was supported 
by the Natural Sciences and Engineering Research Council of Canada under grant No.~OGP0000871, 
the ERDF funded Estonian Center of Excellence in Computer Science, EXCS, 
and the Estonian Ministry of Education and Research target-financed 
research theme no. 0140007s12.}}
\author{Janusz~Brzozowski\inst{1} \and Hellis~Tamm\inst{2}}
\authorrunning{Brzozowski, Tamm}   
\institute{David R. Cheriton School of Computer Science, University of Waterloo, \\
Waterloo, ON, Canada N2L 3G1\\
\{{\tt brzozo@uwaterloo.ca}\}
\and
Institute of Cybernetics, Tallinn University of Technology,\\
Akadeemia tee 21, 12618 Tallinn, Estonia\\
\{{\tt hellis@cs.ioc.ee}\} 
}
\begin{document}
\maketitle

\begin{abstract}
We examine the NFA minimization problem in terms of atomic NFA's, that is, NFA's in which the right language  of every state is a union of atoms, where the atoms of a regular language  are non-empty intersections of complemented and uncomplemented left quotients of the language. We characterize all reduced atomic NFA's of a given language,  that is, those NFA's that have no equivalent states. Using atomic NFA's, we formalize Sengoku's approach to NFA minimization  and prove that his method fails to find all minimal NFA's. We also formulate the Kameda-Weiner NFA minimization in terms of quotients and atoms. 
\medskip

\noin
{\bf Keywords:}
regular language,
quotient,
atom,
atomic NFA,
minimal NFA
\end{abstract}

\section{Introduction}

Nondeterministic finite automata (NFA's) have played a major role in the theory of finite automata and regular expressions  and their applications ever since their introduction in 1959 by Rabin and Scott~\cite{RaSc59}.
In particular, the intriguing problem of finding NFA's with the minimal number of states has received much attention.
The problem was first stated by Ott and Feinstein~\cite{OtFe61} in 1961.
Various approaches have then been used over the years in attempts to answer this question; we mention a few examples here.
In 1970, Kameda and Weiner~\cite{KaWe70} studied this problem using matrices related to the states of the minimal deterministic finite automata (DFA's) for a given language and its reverse. 
In 1992, Arnold, Dicky, and Nivat~\cite{ADN92} used a ``canonical'' NFA. 
In the same year, Sengoku~\cite{Sen92} used ``normal'' NFA's and ``standard formed" NFA's.
In 1995, Matz and Potthoff~\cite{MaPo95} returned to the ``canonical'' automaton and introduced the ``fundamental'' automaton.
In 2003, Ilie and Yu~\cite{IlYu03} applied equivalence relations.
In 2005, Pol\'ak~\cite{Pol05} used the ``universal'' automaton.

Our approach is to use the recently introduced atoms and atomic 
languages~\cite{BrTa11} for this question;  
we briefly state some of their basic properties here.

The \emph{(left) quotient} of a regular language $L$ over an 
alphabet $\Sig$ by a word $w\in\Sig^*$ is the language 
$w^{-1}L=\{x\in\Sig^*\mid wx\in L\}$.
It is well known that the number of states in the complete minimal 
deterministic finite automaton recognizing $L$ is precisely 
the number of distinct quotients of $L$. 
Also, $L$ is its own quotient by the empty word $\eps$, that is $\eps^{-1}L=L$.
A \emph{quotient DFA} is a DFA uniquely determined by a regular language; its states correspond to left quotients. The quotient DFA is isomorphic to the minimal DFA.

An \emph{atom}\footnote{The definition in \cite{BrTa11} does not consider 
the intersection of all the complemented quotients to be an atom. 
Our new definition in~\cite{BrTa12} adds symmetry to the theory.}
of a regular language $L$ with quotients $K_0,\ldots, K_{n-1}$ is any 
non-empty language of the form 
$\widetilde{K_0}\cap \cdots \cap \widetilde{K_{n-1}}$, 
where $\widetilde{K_i}$ is either $K_i$ or $\ol{K_i}$, and $\ol{K_i}$ is the complement of $K_i$ with respect to $\Sig^*$.   
If the intersection with all quotients complemented is non-empty, then it constitutes the \emph{negative} atom;  
all the other atoms are \emph{positive}.
Let the number of atoms be $m$, and let the number of positive atoms be $p$. 
Thus, if the negative atom is present, $p=m-1$; otherwise, $p=m$.

So atoms of $L$ are regular languages uniquely determined by $L$. 
They are pairwise disjoint and define a partition of $\Sig^*$. 
Every quotient of $L$ (including $L$ itself)  
is a union of atoms, and  every quotient of an atom is a union of atoms.
Thus the atoms of a regular language are its basic building blocks. 
Also, $\ol{L}$ defines the same atoms as   $L$. 
The \emph{\'atomaton} is an NFA uniquely determined by a regular language; its states correspond to atoms. 
An NFA is \emph{atomic} if the right language of every state is a union of atoms.
\smallskip

Our contributions are as follows: 
\be
\item 
We characterize all trim reduced atomic NFA's of a given language, where 
an NFA is reduced if it has no equivalent states. 
\item
We show that, if $n_0$ is the minimal number of states of any NFA of a language,  then the language may have trim reduced atomic NFA's 
with as few as $n_0$ states, and as many as $2^p-1$ states.
\item
We demonstrate that the number of atomic minimal NFA's  can be 
as low as 1, or very high. For example, the language $\Sig^*ab\Sig^*$ with 3 
quotients has $281$ atomic minimal NFA's, and additional non-atomic ones.
\item
We formalize the work of Sengoku~\cite{Sen92} in our framework. He had no concept of atoms, but used an NFA equivalent to the \'atomaton 
and NFA's equivalent to atomic NFA's.
Our use of atoms  significantly clarifies Sengoku's method.
\item
We prove that Sengoku's claim that an NFA can be made atomic by adding transitions and without changing the number of states is false.
We show that there exist languages for which the minimal NFA's are all non-atomic. 
So  Sengoku's claim that his method can always find a minimal NFA is also incorrect.
\item
We formulate the Kameda-Weiner NFA minimization method~\cite{KaWe70} in terms of 
quotients and atoms.
\ee 

In Section~\ref{sec:aut} we recall some properties of automata and \'atomata.
Atomic NFA's are then presented in Section~\ref{sec:atomic}.
Sengoku's method is studied in Section~\ref{sec:Sengoku}, and
the Kameda-Weiner method, in Section~\ref{sec:KW}.
Section~\ref{sec:conc} concludes the paper.

\section{Automata and \'Atomata of Regular Languages}
\label{sec:aut}

A~\emph{nondeterministic finite automaton (NFA)} is a quintuple 
$\fN=(Q, \Sig, \eta, I,F)$, where 
$Q$ is a finite, non-empty set of \emph{states}, 
$\Sig$ is a finite non-empty \emph{alphabet}, 
$\eta:Q\times \Sig\to 2^Q$ is the  \emph{transition function},
$I\subseteq  Q$ is the set of  \emph{initial states},
and $F\subseteq Q$ is the set of \emph{final states}.
As usual, we extend the transition function to functions 
$\eta':Q\times \Sig^*\to 2^Q$, and 
$\eta'':2^Q\times \Sig^*\to 2^Q$, but
use 
$\eta$ for all three.

The \emph{language accepted} by an NFA $\fN$ is 
$L(\fN)=\{w\in\Sig^*\mid \eta(I,w)\cap F\neq \emp\}$.
Two NFA's are \emph{equivalent} if they accept the same language. 
The \emph{right language} of a state $q$  is
$L_{q,F}(\fN)=\{w\in\Sig^* \mid \eta(q,w)\cap F\neq\emp\}$.
The \emph{right language} of a set $S$ of states of $\fN$ is
$L_{S,F}(\fN)=\bigcup_{q\in S} L_{q,F}(\fN)$; so
$L(\fN)=L_{I,F}(\fN)$.
A~state is \emph{empty} if its right language is empty.
Two states   are \emph{equivalent} if their right 
languages are equal. 
An NFA is \emph{reduced} if it has no equivalent states.
The \emph{left language} of a state $q$  is
$L_{I,q}=\{w\in\Sig^* \mid q\in \eta(I,w)\}$.
A state is \emph{unreachable} if its left language is empty.
An NFA is \emph{trim} if it has no empty or unreachable states.
An NFA is \emph{minimal} if it has the minimal number of states among all
the equivalent NFA's.

A \emph{deterministic finite automaton (DFA)} is a quintuple 
$\fD=(Q, \Sig, \delta, q_0,F)$, where
$Q$, $\Sig$, and $F$ are as in an NFA, 
$\delta:Q\times \Sig\to Q$ is the transition function, 
and $q_0$ is the initial state. 

We use the following operations on automata: \\
\hglue10pt 1.
The \emph{determinization} operation $\deter$ 
applied to an NFA $\fN$ yields a DFA $\fN^{\deter}$ obtained by 
the subset construction, where only subsets reachable 
from the initial subset of $\fN^\deter$ are used, and the empty subset, 
if present, is included. \\
\hglue10pt 2.
The \emph{reversal} operation $\rev$ applied to NFA $\fN$ yields 
an NFA $\fN^{\rev}$, where the sets of initial and final states are 
interchanged and all transitions are reversed. \\
\hglue10pt 3.
The \emph{trimming} operation $\trim$ applied to an NFA deletes all unreachable and empty states.
\smallskip

The following theorem is from~\cite{Brz63}, and was also discussed in~\cite{BrTa11}: 

\begin{theorem}[Determinization]
\label{thm:Brz}
If $\fD$ is a DFA accepting a language $L$, then $\fD^{\rev\deter}$ is 
a minimal DFA for $L^R$.
\end{theorem}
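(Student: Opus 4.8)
The plan is to establish three facts about $\fD^{\rev\deter}$: that it accepts $L^R$, that every one of its states is reachable, and that no two of its states are equivalent; a DFA that is reachable and has no equivalent states is minimal. The first two facts are immediate. The automaton $\fD^{\rev}$ is obtained from the DFA $\fD=(Q,\Sig,\delta,q_0,F)$ by interchanging the initial and final states and reversing every transition, so it is an NFA with $L(\fD^{\rev})=L^R$; since determinization preserves the accepted language, $\fD^{\rev\deter}$ accepts $L^R$. Reachability is guaranteed by the definition of $\deter$, which retains only those subsets reachable from the initial subset. Hence the entire weight of the theorem rests on showing that distinct states of $\fD^{\rev\deter}$ are distinguishable.

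For the distinguishability step I would exploit the determinism of $\fD$. The states of $\fD^{\rev\deter}$ are subsets of $Q$, and such a subset is final exactly when it contains $q_0$, the unique final state of $\fD^{\rev}$. The key observation is that, for a word $w$, reading $w^R$ in $\fD^{\rev}$ starting from a single state $p$ can reach $q_0$ if and only if $\delta(q_0,w)=p$ in $\fD$: this is simply the path $q_0 \to p$ of $\fD$ traversed backwards, and because $\fD$ is deterministic the state $p=\delta(q_0,w)$ is uniquely determined by $w$. Now take two distinct states $S\neq T$ of $\fD^{\rev\deter}$ and, without loss of generality, a state $p\in S\setminus T$. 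Choosing $w$ with $\delta(q_0,w)=p$, reading $w^R$ from $S$ yields a subset containing $q_0$ (since $p\in S$), whereas reading $w^R$ from $T$ yields a subset not containing $q_0$: no $t\in T$ can satisfy $\delta(q_0,w)=t$, as determinism already forces $\delta(q_0,w)=p\notin T$. Thus $w^R$ is accepted from $S$ but rejected from $T$, and $S$ and $T$ are distinguishable.

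Combining the three facts, $\fD^{\rev\deter}$ is reachable and reduced, hence minimal for $L^R$. The distinguishability argument is the main obstacle, and it is exactly there that one needs every state of $\fD$ to be reachable, so that the witness $w$ with $\delta(q_0,w)=p$ exists. Without this assumption the conclusion can fail: a state $p$ of $\fD$ with empty left language may lie in two distinct reachable subsets that have the same right language, creating equivalent states in $\fD^{\rev\deter}$. I would therefore carry out the proof under the standing assumption that $\fD$ has no unreachable states; granting that, the determinism of $\fD$ supplies the rest of the argument by making the property $\delta(q_0,w)=p$ one that no other state of $\fD$ can share.
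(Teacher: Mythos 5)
Your proof is correct, but there is nothing in the paper to compare it against: Theorem~\ref{thm:Brz} is imported from~\cite{Brz63} and stated without proof. Your argument is the standard one for Brzozowski's determinization theorem, and all three steps are sound: $L(\fD^{\rev\deter})=L^R$ and reachability are immediate from the definitions of $\rev$ and $\deter$ as given in the paper, and the distinguishability step correctly exploits determinism of $\fD$ --- for $p\in S\setminus T$ and $w$ with $\delta(q_0,w)=p$, the subset reached from $S$ on $w^R$ contains $q_0$ while the one reached from $T$ does not, because $\delta(q_0,w)$ is a single state and it is not in $T$. Your closing caveat is also a genuine and necessary observation rather than pedantry: as literally stated the theorem fails for a DFA with unreachable states (an unreachable state has empty left language, so two reachable subsets of $\fD^{\rev\deter}$ differing only in such states are equivalent), so the hypothesis should be read as requiring $\fD$ to be accessible. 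This is harmless in the paper, where the theorem is only ever applied to the quotient (minimal) DFA, but it is worth making explicit exactly as you did.
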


Let $L$ be any non-empty regular language, and let
its set of quotients be $\cK=\{K_0,\ldots, K_{n-1}\}$. 
One of the quotients of $L$ is $L$ itself;
this is called the \emph{initial} quotient and is denoted by $K_{in}$.
A quotient is  \emph{final} if it contains the empty word $\eps$.
The set of final quotients is  $\cF=\{K_i \mid \eps \in K_i\}$.

In the following definition we use a 1-1 correspondence 
$K_i \leftrightarrow  \bmK_i$ between quotients $K_i$ of 
a language $L$ and the states ${\mathbf K}_i$ of the \emph{quotient DFA} 
$\fD$ defined below.
We refer to the ${\mathbf K}_i$ as \emph{quotient symbols}.
\begin{definition}
\label{def:quotientDFA}
\vskip-0.1cm
The \emph{quotient DFA} of $L$ is 
$\fD=(\bmK,\Sig,\delta, {\mathbf K_{in}}, \bmF)$, where
${\bmK}=\{{\mathbf K_0},\ldots,{\mathbf K_{n-1}}\}$,
${\mathbf K_{in}}$ corresponds to $K_{in}$,
$\bmF=\{{\mathbf K_i}\mid K_i \in \cF\}$, and 
$\delta({\mathbf K}_i, a)={\mathbf K}_j$ if and only if 
$a^{-1}K_i = K_j$, for all ${\mathbf K_i},{\mathbf K_j}\in \bmK$ and $a\in \Sig$.
\end{definition}

In a quotient DFA
the right language of ${\mathbf K_i}$ is $K_i$, and 
its left language 
is $\{w\in\Sig^*\mid w^{-1}L=K_i\}$.
The  language $L(\fD)$  is the right language of ${\mathbf K_{in}}$, and hence 
$L(\fD)=L$.
DFA $\fD$ is minimal, since all quotients in $K$ are distinct.

It follows from the definition of an atom, that a regular language  $L$ has at most $2^n$ atoms. 
An atom is \emph{initial} if it has $L$ (rather than $\ol{L}$) as a term;
it is \emph{final} if it contains~$\eps$.
Since $L$ is non-empty, it has at least one quotient containing~$\eps$. 
Hence it has exactly one final atom, the atom 
$\widehat{K_0}\cap \cdots \cap \widehat{K_{n-1}}$, where 
$\widehat{K_i}=K_i$ if $\eps\in K_i$, and $\widehat{K_i}=\ol{K_i}$ otherwise.
Let  $\cA=\{A_0,\ldots, A_{m-1}\}$ be the set of atoms of $L$.
By convention, $\cI$ is the set of initial atoms,  $A_{p-1}$ is the final atom and the negative atom, if present, is $A_{m-1}$.
The negative atom  is not reachable from $\cI$ and can never be final, since there must be at least one final quotient in its intersection.

As above, we use a 1-1 correspondence 
$A_i \leftrightarrow  {\mathbf A}_i$ between atoms $A_i$ of a language $L$ and 
the states ${\mathbf A}_i$ of the NFA $\fA$ defined below.
We refer to the ${\mathbf A}_i$ as \emph{atom symbols.}

\begin{definition}
\label{def:atomaton}
The \emph{\'atomaton} of $L$
 is the NFA $\fA=(\bmA,\Sig,\alpha,\bmA_I, \{\bA_{p-1}\}),$
 where $\bmA=\{{\mathbf A}_i\mid A_i\in \cA\}$,
 $\bmA_I=\{{\mathbf A}_i\mid A_i\in \cI\}$, 
 $\bA_{p-1}$ corresponds to $A_{p-1}$,
 and ${\mathbf A}_j \in \alpha({\mathbf A}_i, a)$ if and only if 
$aA_j \subseteq A_i$, for all ${\mathbf A_i},{\mathbf A_j}\in \bmA$ and $a\in\Sig$.
\end{definition}

In the \'atomaton, the right language of any state ${\mathbf A_i}$ is the atom $A_i$.
\smallskip

The results from~\cite{BrTa11} and  our definition of 
atoms in~\cite{BrTa12} imply that $\fA^\rev$ is a minimal DFA 
that accepts $L^R$. It follows from Theorem~\ref{thm:Brz} that  
$\fA^\rev$ is isomorphic to $\fD^{\rev\deter}$.
The following result from~\cite{BrTa12} makes this isomorphism precise:

\begin{theorem}[Isomorphism]
\label{thm:isomorphism}
Let $\cS$ be the collection of all subsets of the set $\bmK$ of quotient symbols.
Let $\varphi: \bmA \to \cS$ be the mapping assigning to state 
${\mathbf A}_j$, corresponding to
$A_j=K_{i_0}\cap\cdots\cap K_{i_{n-r-1}}\cap\ol{K_{i_{n-r}}}
\cap\cdots\cap \ol{K_{i_{n-1}}}$ of $\fA^\rev$, the set 
$\{\bK_{i_0},\ldots, \bK_{i_{n-r-1}}\}$.
Then $\varphi$ is a DFA isomorphism between $\fA^\rev$ and 
$\fD^{\rev\deter}$. 
\end{theorem}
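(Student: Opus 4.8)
The plan is to exhibit, for every input word, the exact state that each of the two DFAs reaches, and then to check that $\varphi$ merely re-encodes one description as the other. First I would record the three easy facts that make $\varphi$ a plausible isomorphism. Injectivity is immediate, since an atom is completely determined by which quotients occur uncomplemented in its defining intersection, so $\varphi(\bmA_i)=\varphi(\bmA_j)$ forces $A_i=A_j$. For the start states, the final atom $A_{p-1}$ has exactly the final quotients uncomplemented, whence $\varphi(\bmA_{p-1})=\{\bmK_i : \eps\in K_i\}=\bmF$, which is precisely the initial subset of $\fD^{\rev\deter}$ (the former set of final states of $\fD$). For the final states, an atom is initial iff $K_{in}$ appears uncomplemented in it, i.e.\ iff $\bmK_{in}\in\varphi(\bmA_i)$, while a subset is accepting in $\fD^{\rev\deter}$ iff it contains $\bmK_{in}$; so $\varphi$ carries the final states of $\fA^\rev$ (the initial atoms) exactly onto the final states of $\fD^{\rev\deter}$.

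The heart of the argument is a single identity: for every $u\in\Sig^*$, the subset of $\bmK$ reached in $\fD^{\rev\deter}$ from $\bmF$ after reading $u$ equals $\{\bmK_i : u^R\in K_i\}$. I would prove this by induction on $|u|$, using that one determinization step from a subset $S$ on a letter $a$ yields $\{\bmK_j : \delta(\bmK_j,a)\in S\}$, together with the observation that $\delta(\bmK_i,u^R)\in\bmF \Leftrightarrow u^R\in K_i$. On the \'atomaton side, since the right language of $\bmA_j$ in $\fA$ is $A_j$, reversal makes the left language of $\bmA_j$ in $\fA^\rev$ equal to $A_j^R$; as $\fA^\rev$ is a DFA, reading $u$ from its start state lands in the unique $\bmA_j$ with $u^R\in A_j$. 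But $u^R\in A_j$ says precisely that $\{\bmK_i : u^R\in K_i\}$ is the set of uncomplemented quotients of $A_j$, i.e.\ $\varphi(\bmA_j)$. Combining the two computations gives $\delta^{\fD^{\rev\deter}}(\bmF,u)=\varphi(\delta^{\fA^\rev}(\bmA_{p-1},u))$ for all $u$.

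From this intertwining everything else follows cheaply. Applying it to $u$ and to $ua$ shows that $\varphi$ commutes with the transition functions, so $\varphi$ is a DFA morphism sending start to start and, by the first paragraph, finals to finals. Surjectivity onto the state set of $\fD^{\rev\deter}$ is immediate, since every state of that DFA is reachable and hence of the form $\delta^{\fD^{\rev\deter}}(\bmF,u)=\varphi(\cdots)$; with injectivity this makes $\varphi$ a bijection, and therefore a DFA isomorphism. (The negative atom, when present, is the one sent to $\emptyset$, matching the empty sink subset retained by the determinization.) I expect the only delicate point to be the bookkeeping in the inductive identity for $\fD^{\rev\deter}$, and in particular getting the reversal to line up so that reading $u$ tracks membership of $u^R$ rather than $u$; once that identity is in hand, the encoding of atoms as subsets and the reduction of bijectivity to reachability are routine.
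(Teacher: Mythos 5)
Your argument is correct, but note that the paper itself gives no proof of this theorem: it is imported verbatim from~\cite{BrTa12}, and the surrounding text only observes that the isomorphism exists abstractly via Theorem~\ref{thm:Brz}. So there is no in-paper proof to compare against; judged on its own, your proof is sound and self-contained. The key identity $\delta^{\fD^{\rev\deter}}(\bmF,u)=\{\bmK_i : u^R\in K_i\}$ is verified correctly by induction (the step uses $\delta(\bmK_j,a)\in S_u \Leftrightarrow u^R\in a^{-1}K_j \Leftrightarrow (ua)^R\in K_j$), and on the other side the fact that the left language of $\bmA_j$ in $\fA^\rev$ is $A_j^R$ together with the partition property of atoms gives exactly $\varphi(\bmA_j)=\{\bmK_i : u^R\in K_i\}$. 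Your reductions of surjectivity to reachability (every state of $\fD^{\rev\deter}$ is reachable by construction, every state of $\fA^\rev$ is reachable because atoms are non-empty) and of the transition-commuting property to the intertwining identity are all valid, as are the checks on initial and final states; the remark about the negative atom corresponding to the retained empty subset is the right way to handle completeness. The one point you lean on without proof is that $\fA^\rev$ is a (complete, minimal) DFA, but the paper explicitly licenses this just before the theorem statement, so that is a fair assumption rather than a gap.
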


\begin{corollary}
\label{cor:isomorphism}
The mapping $\varphi$ is an NFA isomorphism between 
$\fA$ and $\fD^{\rev\deter\rev}$.
\end{corollary}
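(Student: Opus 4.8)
The plan is to obtain the corollary from Theorem~\ref{thm:isomorphism} by reversing both automata and observing that an isomorphism is preserved under reversal. First I would record two elementary facts about the operation $\rev$. It is an involution on NFA's: reversing interchanges the initial and final state sets and reverses every transition, so performing it twice restores the original NFA; in particular $(\fA^\rev)^\rev=\fA$ and $(\fD^{\rev\deter})^\rev=\fD^{\rev\deter\rev}$. Second, reversal preserves isomorphism: if $\varphi$ is an NFA isomorphism from an NFA $\fN_1$ to an NFA $\fN_2$, then the same state bijection $\varphi$ is an NFA isomorphism from $\fN_1^\rev$ to $\fN_2^\rev$.

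To justify the second fact I would check the three defining conditions of an NFA isomorphism directly on the reversed automata. Since $\varphi$ maps the initial (respectively final) states of $\fN_1$ bijectively onto those of $\fN_2$, and reversal turns final states into initial ones and vice versa, $\varphi$ maps the initial and final state sets of $\fN_1^\rev$ correctly onto those of $\fN_2^\rev$. For the transitions, the reversed transition relation is exactly the converse of the original one: a state $p$ lies in $\eta_1^\rev(q,a)$ precisely when $q$ lies in $\eta_1(p,a)$. Using the transition-preservation of $\varphi$ between $\fN_1$ and $\fN_2$, this biconditional transports through $\varphi$, giving $p\in\eta_1^\rev(q,a)$ if and only if $\varphi(p)\in\eta_2^\rev(\varphi(q),a)$. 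Hence $\varphi$ respects the reversed transitions as well.

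With these facts in hand the corollary is immediate. By Theorem~\ref{thm:isomorphism}, $\varphi$ is a DFA isomorphism between $\fA^\rev$ and $\fD^{\rev\deter}$; viewing each DFA as an NFA whose initial-state set is a singleton and whose transitions are deterministic, $\varphi$ is in particular an NFA isomorphism between them. Applying the reversal-preserves-isomorphism fact yields that $\varphi$ is an NFA isomorphism between $(\fA^\rev)^\rev$ and $(\fD^{\rev\deter})^\rev$, which by the involution property are exactly $\fA$ and $\fD^{\rev\deter\rev}$. Note that reversal does not alter the underlying state sets, so the domain $\bmA$ and the codomain (subsets of $\bmK$) of $\varphi$ are unchanged and no redefinition of the map is needed.

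I expect the only real subtlety, rather than a genuine obstacle, to be the downgrade from a DFA isomorphism to an NFA isomorphism: $\fA$ and $\fD^{\rev\deter\rev}$ are genuinely nondeterministic, so the statement must be phrased for NFA's, and one must be sure that the converse transition relation produced by reversal still satisfies the NFA isomorphism biconditional. Since that condition is symmetric in the two transition relations, it survives the reversal without difficulty.
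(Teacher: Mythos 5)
Your proposal is correct and follows exactly the route the paper intends: the corollary is obtained from Theorem~\ref{thm:isomorphism} by applying reversal to both $\fA^\rev$ and $\fD^{\rev\deter}$, using that $\rev$ is an involution and that the same state bijection $\varphi$ remains an isomorphism after reversal. The paper leaves these routine verifications implicit, and your explicit check of the initial/final swap and the converse transition relation fills them in faithfully.
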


\section{Atomic NFA's}
\label{sec:atomic}

A new class of NFA's was defined in~\cite{BrTa11} as follows:   

\begin{definition}
\label{def:atomic}
An NFA $\fN=(Q,\Sigma,\eta,I,F)$ is \emph{atomic} if for every  
$q\in Q$, the right language $L_{q,F}(\fN)$ of $q$ is a union of some positive  atoms of $L(\fN)$. 
\end{definition}

The following theorem, slightly restated, was proved in~\cite{BrTa11}:
\newpage

\begin{theorem}[Atomicity]
\label{thm:atomic}
A trim NFA $\fN$ is atomic if and only if $\fN^{\rev\deter}$
is minimal.
\end{theorem}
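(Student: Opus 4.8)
The plan is to translate atomicity of $\fN$ into a statement about a partition of $\Sig^*$ and then compare that partition with the atom partition. The main tool is an explicit description of the reachable subsets produced by determinizing $\fN^\rev$. First I would show that the subset reached in $\fN^{\rev\deter}$ by a word $u$ is exactly $S_{u^R}$, where for $w\in\Sig^*$ I write $S_w=\{q\in Q\mid w\in L_{q,F}(\fN)\}$; this follows by unwinding the definitions of reversal and the subset construction, since $q\in\eta^\rev(F,u)$ holds iff $u^R\in L_{q,F}(\fN)$. Consequently the states of $\fN^{\rev\deter}$ are exactly the distinct sets $S_w$, and, letting $\pi$ be the partition of $\Sig^*$ whose blocks are the fibres of the map $w\mapsto S_w$, the number of states of $\fN^{\rev\deter}$ equals $|\pi|$.

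Next I would record the two partitions to be compared. Let $\alpha$ be the atom partition (two words are equivalent iff they lie in the same atom, equivalently iff they belong to exactly the same quotients of $L$), so $|\alpha|=m$. Atomicity is then a partition statement: since $w\in L_{q,F}(\fN)$ iff $q\in S_w$, each right language is a union of atoms precisely when $S_w$ depends only on the atom containing $w$, i.e. precisely when $\alpha$ refines $\pi$. I would also use the auxiliary partition $\beta$ given by $w\sim_\beta w'$ iff $(w^R)^{-1}L^R=(w'^R)^{-1}L^R$. A short computation, $(w^R)^{-1}L^R=(Lw^{-1})^R$ with $Lw^{-1}=\{v\mid vw\in L\}$, shows that $w\sim_\beta w'$ iff $w$ and $w'$ belong to exactly the same quotients of $L$; hence $\beta=\alpha$.

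The core of the argument is then two comparisons. Because $\fN^{\rev\deter}$ is a DFA for $L^R$, any two words $w,w'$ with $S_w=S_{w'}$ reach the same state and hence define the same quotient of $L^R$; thus $\pi$ always refines $\beta=\alpha$, giving $|\pi|\ge m$ unconditionally. On the other hand, by the Determinization and Isomorphism results (Theorem~\ref{thm:Brz}, Theorem~\ref{thm:isomorphism}), the minimal DFA $\fA^\rev$ for $L^R$ has exactly $m$ states, so the reachable DFA $\fN^{\rev\deter}$ is minimal iff $|\pi|=m$, iff $\pi=\alpha$. Combining this with the atomicity reformulation, $\fN$ is atomic iff $\alpha$ refines $\pi$, which together with the unconditional fact that $\pi$ refines $\alpha$ is equivalent to $\pi=\alpha$, i.e. to minimality of $\fN^{\rev\deter}$.

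I expect the delicate point to be the negative atom, which is exactly where the word \emph{positive} in Definition~\ref{def:atomic} matters. Here trimness does the work: if $q$ is reachable, say $q\in\eta(I,u)$, and $w\in L_{q,F}(\fN)$, then $uw\in L$ and so $w\in u^{-1}L$ lies in a genuine quotient; hence no right language of a trim NFA can meet the negative atom. This shows that for trim $\fN$ the phrases ``union of atoms'' and ``union of positive atoms'' coincide, so the reformulation of atomicity above is faithful to Definition~\ref{def:atomic}; moreover it pins the negative atom to the empty subset (the dead state) on the automaton side, ensuring that both partitions genuinely have $m$ blocks and that the counting argument is not off by the dead state.
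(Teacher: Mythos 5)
Your argument is correct, and in fact the paper gives no proof of Theorem~\ref{thm:atomic} at all --- it only cites~\cite{BrTa11}, where the proof rests on exactly the observation you make first, namely that the subset of $\fN^{\rev\deter}$ reached by $u$ is $\{q\mid u^R\in L_{q,F}(\fN)\}$, so that minimality of $\fN^{\rev\deter}$ amounts to these subsets separating words exactly as the atoms do. Your packaging via the three partitions $\pi$, $\alpha$, $\beta$ (with the unconditional refinement $\pi\preceq\alpha$ and atomicity as $\alpha\preceq\pi$) and your explicit use of trimness to rule out the negative atom are sound and fill in the details the paper omits.
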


This theorem allows us to test whether an NFA $\fN$ accepting a language $L$ is atomic. To do this, reverse $\fN$ and apply the subset construction. Then $\fN$ is atomic if and only if $\fN^{\rev\deter}$
is isomorphic to the minimal DFA of $L^R$.

All three possibilities for the atomic nature of $\fN$ and $\fN^\rev$ exist:
NFA $\fN_{a}$ of Table~\ref{tab:Na} and its reverse are not atomic.
NFA $\fN_{b}$ of Table~\ref{tab:Nb} is atomic, but its reverse is not.
NFA $\fN_{c}$ of Table~\ref{tab:Nc} and its reverse are both atomic.
Note that all three of these NFA's are equivalent, and they accept $\Sig^*ab\Sig^*$.
\begin{table}[t]
\begin{minipage}[b]{0.3\linewidth}
\caption{$\fN_a$.}
\label{tab:Na}
\begin{center}
$
\begin{array}{|c| c||c| c| }    
\hline
& \ \  \ \ 
&\ \ a \ \ &\ \ b \ \   \\
\hline  
\rightarrow & 0
& \ \{0,1\} \ & \ \{0\} \   \\
\hline  
& 1
&  & \{2\}    \\
\hline  
\leftarrow& 2
&  \{2\} &  \{2\}   \\
\hline  
\end{array}
$
\end{center}
\end{minipage}
\hspace{0.3cm}
\begin{minipage}[b]{0.3\linewidth}
\caption{$\fN_b$.}
\label{tab:Nb}
\begin{center}
$
\begin{array}{|c| c||c| c| }    
\hline
& \ \  \ \ 
&\ \ a \ \ &\ \ b \ \ \\
\hline  
\rightarrow & 0
& \ \{1\} \ & \ \{0\} \ \\
\hline  
 & 1
&  \{1 \} & \  \{1,2 \} \ \\
\hline  
\leftarrow & \ 2 \
 & \ \{1,2\} \ & \ \{ 0\} \ \\
\hline  
\end{array}
$
\end{center}
\end{minipage}
\hspace{0.3cm}
\begin{minipage}[b]{0.3\linewidth}
\caption{$\fN_c$.}
\label{tab:Nc}
\begin{center}
$
\begin{array}{|c| c||c| c| }    
\hline
& \ \  \ \ 
&\ \ a \ \ &\ \ b \ \ \\
\hline  
\rightarrow & 0
& \ \{1\} \ & \ \{0\} \ \\
\hline  
 & 1
&  \{1\} & \{1,2\} \\
\hline  
\leftarrow & \ 2 \
 & \ \{2\} \ & \  \ \\
\hline  
\end{array}
$
\end{center}
\end{minipage}

\end{table}

If we allow equivalent states, there is an infinite number of atomic NFA's, 
but their behaviours are not distinct; hence we consider only reduced NFA's.
Suppose $\fB=(\cB,\Sig,\beta,\cB_I,\cB_F)$ is any trim reduced atomic NFA accepting $L$.
Since $\fB$ is atomic, the right language of any state in $\fB$ is a union of positive atoms
of $L$; hence the states of $\fB$ can be represented by sets of positive atom symbols.
Because $\fB$ is trim, it does not have a state with the empty set of atom symbols.
Since $\fB$ is reduced, no set of atom symbols appears twice.
Thus the state set $\cB$ is a collection of non-empty sets of positive atom symbols.

\begin{theorem}[Legality]
\label{thm:unions}
Suppose $L$ is a regular language,  its \'atomaton is
$\fA=(\bmA,\Sig,\alpha,\bmA_I, \{\bA_{p-1}\})$, and
$\fB=(\cB,\Sig,\beta,\cB_I,\cB_F)$ is a trim NFA, where 
$\cB=\{\bmB_1,\ldots,\bmB_r\}$ is a collection of sets of positive atom symbols and 
$\cB_I,\cB_F\subseteq\cB$.
If $\cB_i\subseteq  \cB$, define 
$S(\cB_i)=\bigcup_{\bmB_i\in \cB_i} \bmB_i$ to be the set of atom symbols 
appearing in the sets $\bmB_i$  of $\cB_i$. 
Then $\fB$ is a reduced atomic NFA of $L$ if and only if it satisfies the following
conditions:
\be
\item
\label{cond:in}
$S(\cB_I)=\bmA_I$.
\item
\label{cond:trans}
For all $\bmB_i\in \cB$, $S(\beta(\bmB_i,a))=\alpha(\bmB_i,a)$.
\item
\label{cond:out}
For all $\bmB_i\in \cB$, we have $\bmB_i\in \cB_F$ if and only if 
$\bA_{p-1}\in \bmB_i$.
\ee
\end{theorem}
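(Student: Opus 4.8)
The plan is to route both directions of the equivalence through a single bridging fact: once each state $\bmB_i\in\cB$ is identified with the subset of positive atom symbols whose union is its right language, the right language of $\bmB_i$ in $\fB$ coincides with the right language of $\bmB_i$ regarded as a \emph{set of states of the \'atomaton}. To manage this cleanly I introduce the shorthand $R(S)=\bigcup_{\bA_j\in S}A_j$ for a set $S$ of atom symbols; since in $\fA$ the right language of $\bA_j$ is $A_j$, we have $R(S)=L_{S,\{\bA_{p-1}\}}(\fA)$. I will lean on four elementary properties of $R$: it distributes over unions, so $R(S(\cB_i))=\bigcup_{\bmB_i\in\cB_i}R(\bmB_i)$; it is injective, because the atoms are non-empty and pairwise disjoint; it satisfies $a^{-1}R(S)=R(\alpha(S,a))$, which is just the identity $a^{-1}L_{S,F}=L_{\alpha(S,a),F}$ specialized to $\fA$; and $\eps\in R(S)$ exactly when $\bA_{p-1}\in S$, because $A_{p-1}$ is the unique atom containing $\eps$.

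For the ``if'' direction I would first prove, by induction on $|w|$, that conditions \ref{cond:trans} and \ref{cond:out} force $L_{\bmB_i,\cB_F}(\fB)=R(\bmB_i)$ for every $\bmB_i\in\cB$. The base case $w=\eps$ is exactly condition \ref{cond:out} together with the fact that $A_{p-1}$ is the only atom containing $\eps$. For $w=ax$ I expand $L_{\bmB_i,\cB_F}(\fB)$ over the successors $\beta(\bmB_i,a)$, apply the induction hypothesis to each to obtain $R\bigl(S(\beta(\bmB_i,a))\bigr)$, rewrite this via condition \ref{cond:trans} as $R(\alpha(\bmB_i,a))$, and finally recognize $R(\alpha(\bmB_i,a))=a^{-1}R(\bmB_i)$. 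With the correspondence established, $\fB$ is atomic because each right language $R(\bmB_i)$ is a union of positive atoms; it is reduced because $R$ is injective, so distinct states carry distinct right languages; and it accepts $L$ because $L(\fB)=\bigcup_{\bmB_i\in\cB_I}R(\bmB_i)=R(S(\cB_I))=R(\bmA_I)$, and $R(\bmA_I)$ is the union of the initial atoms, which equals $L$ (an atom is initial iff it has $L$ as an uncomplemented term, i.e.\ iff it is contained in $L$, and $L$ is a union of atoms).

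For the ``only if'' direction the setup preceding the theorem already supplies $L_{\bmB_i,\cB_F}(\fB)=R(\bmB_i)$, since $\fB$ is atomic and its states are sets of positive atom symbols. Condition \ref{cond:out} is then immediate: $\bmB_i\in\cB_F$ iff $\eps\in R(\bmB_i)$ iff $\bA_{p-1}\in\bmB_i$. For condition \ref{cond:trans} I compute $a^{-1}R(\bmB_i)$ in two ways: inside $\fB$ it equals $L_{\beta(\bmB_i,a),\cB_F}(\fB)=R\bigl(S(\beta(\bmB_i,a))\bigr)$, and inside $\fA$ it equals $R(\alpha(\bmB_i,a))$; injectivity of $R$ then gives $S(\beta(\bmB_i,a))=\alpha(\bmB_i,a)$. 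For condition \ref{cond:in}, $R(S(\cB_I))=L(\fB)=L=R(\bmA_I)$, and injectivity gives $S(\cB_I)=\bmA_I$.

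The step that needs the most care is the interaction with the negative atom. A priori $\alpha(\bmB_i,a)$ could contain the symbol $\bA_{m-1}$, whereas $S(\beta(\bmB_i,a))$ is by construction a set of \emph{positive} atom symbols, so condition \ref{cond:trans} can only hold if positive states never transition into the negative atom of $\fA$. I would record that this is indeed the case: under the isomorphism $\varphi$ of Theorem~\ref{thm:isomorphism} the negative atom corresponds to the empty quotient set, a sink of $\fD^{\rev\deter}$, so in $\fA$ only the negative atom reaches the negative atom; equivalently, since $\fB$ is atomic every quotient of a right language is again a union of positive atoms, so $a^{-1}R(\bmB_i)$ contains no negative atom. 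Either way, applying injectivity of $R$ over \emph{all} atom-symbol sets both forces the transition equality and confirms that $\alpha(\bmB_i,a)$ stays among the positive atoms, closing the transition and initial-state cases together.
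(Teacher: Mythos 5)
Your proof is correct and follows essentially the same route as the paper's: an induction on word length that uses Condition~\ref{cond:trans} to propagate the state/atom correspondence along words (the paper isolates this as Lemma~\ref{lem:beta} in the form $S(\beta(\bmB_i,w))=\alpha(\bmB_i,w)$, whereas you fold it directly into the right-language identity $L_{\bmB_i,\cB_F}(\fB)=R(\bmB_i)$), followed by reading each condition off the atomicity assumption for the converse. Your use of injectivity of the atom-union map $R$ streamlines the converse direction, and your explicit check that $\alpha(\bmB_i,a)$ cannot contain the negative atom symbol is a point the paper's proof passes over silently; both are welcome but do not change the substance of the argument.
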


Before proving the theorem, we require the following lemma:

\begin{lemma}
\label{lem:beta}
If $\fB$ satisfies Condition~\ref{cond:trans} of Theorem~\ref{thm:unions}, then
$S(\beta(\bmB_i,w)) = \alpha(\bmB_i,w)$ 
for every $\bmB_i\in \cB$ and $w\in \Sig^*$.
\end{lemma}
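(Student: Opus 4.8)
The plan is to prove the identity by induction on the length of $w$, exploiting the single fact that the set-extraction operator $S$ distributes over arbitrary unions; this is exactly the feature that lets the single-letter identity of Condition~\ref{cond:trans} propagate to all words.

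For the base case $w=\eps$, the extended transition function of $\fB$ gives $\beta(\bmB_i,\eps)=\{\bmB_i\}$, so $S(\beta(\bmB_i,\eps))=\bmB_i$; on the other side, reading $\bmB_i$ as a set of atom symbols, the extended átomaton transition gives $\alpha(\bmB_i,\eps)=\bmB_i$ as well. Thus both sides equal $\bmB_i$ and the base case holds.

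For the inductive step, I would assume $S(\beta(\bmB_i,w))=\alpha(\bmB_i,w)$ for all states $\bmB_i\in\cB$ and a fixed word $w$, and then establish the identity for $wa$ with $a\in\Sig$. First I would expand
\[
\beta(\bmB_i,wa)=\bigcup_{\bmB_j\in\beta(\bmB_i,w)}\beta(\bmB_j,a),
\]
apply $S$, and use that $S$ commutes with union to get $S(\beta(\bmB_i,wa))=\bigcup_{\bmB_j\in\beta(\bmB_i,w)}S(\beta(\bmB_j,a))$. Condition~\ref{cond:trans} replaces each inner term $S(\beta(\bmB_j,a))$ by $\alpha(\bmB_j,a)$. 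Next, since each state $\bmB_j$ is itself a set of atom symbols, I would expand $\alpha(\bmB_j,a)=\bigcup_{\bA\in\bmB_j}\alpha(\bA,a)$ and reindex the resulting double union as a single union over $\bigcup_{\bmB_j\in\beta(\bmB_i,w)}\bmB_j=S(\beta(\bmB_i,w))$. Finally, the induction hypothesis rewrites this as a union over $\alpha(\bmB_i,w)$, which is by definition $\alpha(\alpha(\bmB_i,w),a)=\alpha(\bmB_i,wa)$, completing the step.

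The main thing to get right is the two-level set structure: the states of $\fB$ are themselves sets of atom symbols, while $\beta$ maps each state to a set of such states, so $S(\beta(\cdot,a))$ flattens two layers at once. The crucial bookkeeping step is the reindexing of $\bigcup_{\bmB_j\in\beta(\bmB_i,w)}\bigcup_{\bA\in\bmB_j}\alpha(\bA,a)$ into $\bigcup_{\bA\in S(\beta(\bmB_i,w))}\alpha(\bA,a)$, which is precisely where the definition of $S$ and the induction hypothesis are combined; once this is handled, the remainder is routine manipulation of the extended transition functions.
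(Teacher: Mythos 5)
Your proof is correct and follows essentially the same route as the paper's: induction on the length of $w$, with the base case $w=\eps$ handled directly and the inductive step carried out by distributing $S$ over the union $\beta(\bmB_i,wa)=\bigcup_{\bmB_j\in\beta(\bmB_i,w)}\beta(\bmB_j,a)$, applying Condition~\ref{cond:trans} to each term, and then invoking the induction hypothesis. Your explicit reindexing of the double union is just a more detailed rendering of the paper's step $\alpha(\bmB_{i_1},a)\cup\cdots\cup\alpha(\bmB_{i_h},a)=\alpha(\bmB_{i_1}\cup\cdots\cup\bmB_{i_h},a)=\alpha(S(\beta(\bmB_i,w)),a)$.
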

\begin{proof}
For $w=\eps$, we have $S(\beta(\bmB_i,\eps))=S(\bmB_i)=\bmB_i$, and 
$\alpha(\bmB_i,\eps)=\bmB_i$; so the claim holds for this case.

Assume that $S(\beta(\bmB_i,w))=\alpha(\bmB_i,w)$
for all $\bmB_i\in \cB$ and all $w\in \Sig^*$ with length less than or 
equal to $l\ge 0$. 
We prove that $S(\beta(\bmB_i,wa))=\alpha(\bmB_i,wa)$ for every $a\in \Sig$.
Let $\beta(\bmB_i,w)=\{\bmB_{i_1},\ldots,\bmB_{i_h}\}$ for
some $\bmB_{i_1},\ldots,\bmB_{i_h}\in \cB$. 
Since $\beta(\bmB_i,wa)=\beta(\beta(\bmB_i,w),a)=
\beta(\bmB_{i_1},a)\cup\cdots\cup\beta(\bmB_{i_h},a)$, we have
$S(\beta(\bmB_i,wa))=S(\beta(\bmB_{i_1},a)\cup\cdots\cup\beta(\bmB_{i_h},a))=
S(\beta(\bmB_{i_1},a))\cup\cdots\cup S(\beta(\bmB_{i_h},a))$.
By   Condition~\ref{cond:trans}, the latter is equal 
to $\alpha(\bmB_{i_1},a)\cup\cdots\cup \alpha(\bmB_{i_h},a)=
\alpha(\bmB_{i_1}\cup\cdots\cup\bmB_{i_h},a)=
\alpha(S(\beta(\bmB_i,w)),a)$.
By the inductive assumption, we get
$\alpha(S(\beta(\bmB_i,w)),a)=\alpha(\alpha(\bmB_i,w),a)=
\alpha(\bmB_i,wa)$, which proves our claim.
\qed
\end{proof}
\noin
{\bf Proof of Theorem~\ref{thm:unions}}
\begin{proof}
First we prove that any NFA $\fB$ satisfying
Conditions~\ref{cond:in}--\ref{cond:out} is an atomic NFA of $L$.
Let $\bmB_i\in\cB$ be a state of $\fB$. 
If $w\in L_{ \bmB_i, \cB_F} (\fB)$, then by 
Condition~\ref{cond:out}, there exists 
$\bmB_j\in\beta(\bmB_i,w)$ such that $\bA_{p-1}\in \bmB_j$, and 
we have $\bA_{p-1}\in S(\beta(\bmB_i,w))$. 
By Lemma~\ref{lem:beta}, we get $\bA_{p-1}\in \alpha(\bmB_i,w)$, 
implying that there is some $\bA_k\in\bmB_i$ such that 
$w\in L_{\bA_k,\{\bA_{p-1}\}}(\fA)$. 
Conversely, if $w\in L_{\bA_k,\{\bA_{p-1}\}}(\fA)$ and $\bA_k\in\bmB_i$, then 
$\bA_{p-1}\in\alpha(\bmB_i,w)=S(\beta(\bmB_i,w))$. 
Hence there exists $\bmB_j\in\beta(\bmB_i,w)$ such that $\bA_{p-1}\in \bmB_j$. 
Consequently, every word accepted in $\fB$ from state $\bmB_i$ is 
in some atom $A_k$ such that $\bA_k\in \bmB_i$, and 
every word in an atom $A_k$ such that $\bA_k\in \bmB_i$, is also in 
$L_{ \bmB_i, \cB_F} (\fB)$.
Therefore the right language of $\bmB_i$ in $\fB$ is equal to 
the union of atoms $A_k$ such that $\bA_k\in \bmB_i$.
In particular, $L_{\cB_I,\cB_F} (\fB)$ is the union of atoms whose atom symbols
appear in the initial collection of $\fB$ which, by Condition~\ref{cond:in},  
is the same as the union of atoms whose atom symbols are initial in $\fA$.
But that last union is precisely $L_{\bmA_I,\{\bA_{p-1}\}}(\fA)=L$.
Since any two sets $\bmB_i$ and $\bmB_j$ are different, and  
atoms are disjoint, $\fB$ is reduced.
Hence $\fB$ is a reduced atomic NFA of $L$.

Conversely, we show that if $\fB$ is a reduced atomic NFA of $L$, 
then it must satisfy Conditions~\ref{cond:in}--\ref{cond:out}.
So in the following we assume that $\fB$ is atomic, that is, 
for every state $\bmB_i$ of $\fB$, the right language of $\bmB_i$ 
is equal to the union of atoms $A_k$ such that $\bA_k\in \bmB_i$.

First, we show that Condition~\ref{cond:in} holds.
Let $\bA_j\in S(\cB_I)$. Then there is a state $\bmB_j\in\cB_I$ such that 
$\bA_j\in \bmB_j$. So for any $w\in A_j$, $w\in L(\fB)$. 
Since $L(\fB)=L(\fA)$, we have $w\in L(\fA)$ for all $w\in A_j$.
Thus $\bA_j\in \bmA_I$.
Conversely, if $\bA_j\in \bmA_I$, then for all $w\in A_j$, $w\in L(\fA)=L(\fB)$. 
Since $\fB$ is atomic, there is an initial state $\bmB_j$ such that 
$A_j\subseteq L_{\bmB_j,\cB_F}(\fB)$. Hence $\bA_j\in S(\cB_I)$. 

Next, we prove Condition~\ref{cond:trans}. 
If $\bA_j\in S(\beta(\bmB_i,a))$, then $L_{\bmB_i,\cB_F}(\fB)$ must contain $aA_j$. 
So there must exist some $\bA_i\in \bmB_i$ such that $aA_j \subseteq A_i$.
Thus $\bA_j\in \alpha(\bmB_i,a)$.
Conversely, if $\bA_j\in \alpha(\bmB_i,a)$, then there is an atom 
$\bA_i\in \bmB_i$ such that $\bA_j \in \alpha(\bA_i,a)$, implying $aA_j \subseteq A_i$.
Since $\bA_i\in \bmB_i$, $L_{\bmB_i,\cB_F}(\fB)$ must contain $aA_j$.
Hence $\bA_j\in S(\beta(\bmB_i,a))$.

To show that Condition~\ref{cond:out} holds, we first suppose that
$\bmB_i\in \cB_F$. Then $\eps$ is in the right language of $\bmB_i$.
Since $\fB$ is atomic, $\eps$ must be in one of the atoms of $\bmB_i$.
However, the only atom containing $\eps$ is $A_{p-1}$, so $\bA_{p-1}\in\bmB_i$.
Conversely, if $\bA_{p-1}\in \bmB_i$, then $\eps$ is in the right language of  
$\bmB_i$, and $\bmB_i$ is a final state by definition of an NFA.
\qed
\end{proof}

\begin{example}
\label{ex:atomconstr}
Consider the trim \'atomaton $\fA^\trim$ of Table~\ref{tab:deflegal1} and 
the atomic NFA $\fB$ of Table~\ref{tab:deflegal2}.
Here $\cB=\{\bmB_0,\bmB_1,\bmB_2 \}$, where $\bmB_0=\{\bA_0,\bA_1\}$,
$\bmB_1=\{\bA_2\}$, and 
$\bmB_2=\{\bA_0,\bA_2\}$.
The initial collection is $\cB_I=\{\bmB_0 \}=\{ \{\bA_0,\bA_1\} \}$, and the final
collection is 
$\cB_F=\{\bmB_1,\bmB_2 \}=\{ \{\bA_2\},\{\bA_0,\bA_2\} \}$.
One verifies that all the conditions of Theorem~\ref{thm:unions} hold,
and  NFA's $\fA^\trim$ and $\fB$ are equivalent.
\begin{table}[t]
\begin{minipage}[b]{0.37\linewidth}
\caption{\'Atomaton $\fA^\trim$.}
\label{tab:deflegal1}
$
\begin{array}{|c| c||c| c| c|c|}    
\hline
& \ \  \ \ 
&\ \ a \ \ &\ \ b \ \   \\
\hline  
\rightarrow & \ \bA_0 \
&  \ \{\bA_0,\bA_1\} \  & \ \{\bA_0,\bA_2\} \    \\
\hline  
\rightarrow& \bA_1
&  \  \{\bA_2\} \  &   \\
\hline  
\leftarrow& \bA_2
&  \   \  &   \\
\hline
\end{array}
$
\end{minipage}
\hspace{.5cm}
\begin{minipage}[b]{0.55\linewidth}
\caption{Atomic NFA $\fB$.}
\label{tab:deflegal2}
$
\begin{array}{|c| c||c| c| }    
\hline
& \ \  \ \ 
&\ \ a \ \ &\ \ b \ \ \\
\hline  
\rightarrow & \{\bA_0,\bA_1\}
& \ \{ \{\bA_0,\bA_1\},\{\bA_2\} \} \ & \ \{ \{\bA_0,\bA_2\} \}\ \\
\hline  
\leftarrow& \ \{\bA_2\} \
&  & \\
\hline  
\leftarrow & \{\bA_0,\bA_2\}
& \ \{ \{\bA_0,\bA_1\} \} \ & \ \{ \{\bA_0,\bA_2\}\} \ \\
\hline  
\end{array}
$
\end{minipage}
\end{table}
\qedb
\end{example}

The number of trim reduced atomic NFA's can be very large. 
There can be such NFA's with as many as $2^p-1$ non-empty states, 
since there are that many non-empty sets of positive atoms. 
However, in a general case, not all sets of positive atom symbols can be states of 
an atomic NFA. 
The largest reduced atomic NFA is characterized in the following theorem.

\begin{theorem}[Maximal atomic NFA]
\label{thm:subsets}
If $\cB$ is the collection of all sets $\bmB_i$ such that 
$\bmB_i$ is a non-empty subset of the set of positive atom symbols 
$\{\bA_h \mid A_h\subseteq K_j\}$ of any quotient $K_j$ of $L$, 
then there exists a trim reduced atomic NFA of $L$ with  state set $\cB$.
\end{theorem}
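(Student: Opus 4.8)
The plan is to equip the given state set $\cB$ with transitions, an initial collection, and a final collection satisfying the three conditions of Theorem~\ref{thm:unions}, and then to verify directly that the resulting NFA is trim, so that Theorem~\ref{thm:unions} applies. For a quotient $K_j$ write $P_j=\{\bA_h\mid A_h\subseteq K_j\}$ for its set of positive atom symbols, so that $\cB$ is exactly the collection of non-empty subsets of the sets $P_j$, and $P_{in}=\bmA_I$. I would define $\beta(\bmB_i,a)$ to be the collection of all $\bmB_j\in\cB$ with $\bmB_j\subseteq\alpha(\bmB_i,a)$, that is, all non-empty subsets of $\alpha(\bmB_i,a)$; I take $\cB_I$ to be all non-empty subsets of $P_{in}$, and $\cB_F=\{\bmB_i\in\cB\mid\bA_{p-1}\in\bmB_i\}$. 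The first point to check is that $\beta$ is well defined, i.e.\ that each such $\bmB_j$ really lies in $\cB$: if $\bmB_i\subseteq P_j$ then for any $\bA_l\in\alpha(\bmB_i,a)$ there is $\bA_k\in\bmB_i$ with $aA_l\subseteq A_k\subseteq K_j$, whence $A_l\subseteq a^{-1}K_j$; since $a^{-1}K_j$ is again a quotient, $\alpha(\bmB_i,a)\subseteq P_{a^{-1}K_j}$, so every non-empty subset of $\alpha(\bmB_i,a)$ is indeed a state of $\cB$.

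With this definition Conditions~\ref{cond:in} and~\ref{cond:out} are immediate: $S(\cB_I)=P_{in}=\bmA_I$, and $\bmB_i\in\cB_F$ iff $\bA_{p-1}\in\bmB_i$ by construction. Condition~\ref{cond:trans} holds because the union of all non-empty subsets of $\alpha(\bmB_i,a)$ is $\alpha(\bmB_i,a)$, so $S(\beta(\bmB_i,a))=\alpha(\bmB_i,a)$ (the case $\alpha(\bmB_i,a)=\emp$ being vacuous). To rule out empty states I would invoke Lemma~\ref{lem:beta} (which needs only Condition~\ref{cond:trans}): the right language of $\bmB_i$ is $\{w\mid\bA_{p-1}\in\alpha(\bmB_i,w)\}=\bigcup_{\bA_k\in\bmB_i}A_k$, which is non-empty since $\bmB_i\neq\emp$ and atoms are non-empty.

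The main work, and the step I expect to be the obstacle, is to show there are no unreachable states, since $\cB$ contains \emph{every} non-empty subset of every $P_j$, not merely a sparse reachable family. The key is a strengthening of the transition analysis: by induction on $|w|\ge 1$ one shows that $\beta(\bmB_i,w)$ equals the collection of all non-empty subsets of $\alpha(\bmB_i,w)$, the inductive step using that $\alpha(\bmB_i,w)$ itself is a member of $\beta(\bmB_i,w)$ and that $\alpha$ is monotone in its first argument. Next I would prove the structural identity linking the \'atomaton to the quotient DFA: by induction $\alpha(\bA_k,u)=\{\bA_l\mid A_l\subseteq u^{-1}A_k\}$, and summing over $A_k\subseteq K_{in}$ gives $\alpha(P_{in},u)=\{\bA_l\mid A_l\subseteq u^{-1}K_{in}\}=P_{u^{-1}K_{in}}$; here one uses that an atom contained in a union of atoms is contained in one of them.

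Combining these, let $\bmB\in\cB$ be arbitrary, say $\bmB\subseteq P_j$. If $K_j=K_{in}$ then $\bmB\in\cB_I$ and is reached by $\eps$. Otherwise, since every quotient is reachable from $K_{in}$ in the minimal quotient DFA, there is a (necessarily non-empty) word $u$ with $u^{-1}K_{in}=K_j$; then $\alpha(P_{in},u)=P_j$, so $\beta(P_{in},u)$ is the collection of \emph{all} non-empty subsets of $P_j$, which contains $\bmB$. As $P_{in}\in\cB_I$, the state $\bmB$ is reachable. Hence $\fB$ is trim, and Theorem~\ref{thm:unions} shows it is a reduced atomic NFA of $L$ with state set $\cB$. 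The delicate points are the well-definedness of $\beta$ and the full-subset reachability argument; everything else is bookkeeping with $S(\cdot)$ and Lemma~\ref{lem:beta}.
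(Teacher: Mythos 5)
Your proposal is correct and follows essentially the same route as the paper: you build the same NFA (the same $\beta$, $\cB_I$, and $\cB_F$), verify trimness, and invoke Theorem~\ref{thm:unions}. If anything you are more careful than the paper in two places---checking that $\beta$ actually lands in $\cB$, and making the reachability induction (that $\beta(\bmB_i,w)$ is the full collection of non-empty subsets of $\alpha(\bmB_i,w)$, combined with $\alpha(\bmA_I,u)$ being the atom-symbol set of $u^{-1}L$) fully explicit---while your use of Lemma~\ref{lem:beta} to rule out empty states just replaces the paper's explicit construction of a path to a final state.
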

\begin{proof}
Let $\fB=(\cB,\Sig,\beta,\cB_I,\cB_F)$ be an NFA in which the state set 
$\cB$ is the collection of all sets $\bmB_i$ such that 
$\bmB_i$ is a non-empty subset of the set of atom symbols 
$\{\bA_h \mid A_h\subseteq K_j\}$ of any quotient $K_j$ of $L$, 
where $j\in\{0,\ldots,n-1\}$,
$\beta(\bmB_i,a)= \{\bmB_j \mid \bmB_j\subseteq\alpha(\bmB_i,a)\}$
for every $\bmB_i\in\cB$ and $a\in\Sig$,
$\bmB_i\in\cB_I$ if and only if $\bmB_i$ is a subset of the set of atom symbols 
of the initial quotient $K_{in}$, and 
$\bmB_i\in\cB_F$ if and only if $\bA_{p-1}\in\bmB_i$. 
We claim that $\fB$ is a trim reduced atomic NFA of $L$.

First, we show that $\fB$ is trim. 
Let us consider any state $\bmB_i$ of $\fB$.
Let $K_j$ be a quotient such that $\bmB_i$ is a subset of the set of atom symbols
of $K_j$, and let $\bmB_j$ be the set of atom symbols corresponding to $K_j$.
Let $\bmB_0$ be the set of atom symbols corresponding to the initial
quotient $K_{in}$ of $L$. Note that $\bmB_0=\bmA_I$.
Since every set of atom symbols corresponding to some quotient is reachable
from the initial set of atom symbols in the \'atomaton $\fA$, 
there must be a word $w\in\Sig^*$, such that $\bmB_j$ is reachable 
from $\bmB_0$ by $w$ in $\fA$. 
We show that $\bmB_i$ is reachable from some initial state of $\fB$ by $w$.
If $w=\eps$, then $K_j=K_{in}$, and since $\bmB_i\subseteq\bmB_j$, it follows 
that $\bmB_i$ is an initial state of $\fB$ reachable from itself 
by $\eps$.
If $w=ua$ for some $u\in\Sig^*$ and $a\in\Sig$, then there is a state
$\bmB_u$ of $\fB$, reachable from $\bmB_0$ by $u$, such that 
$\bmB_u$ corresponds to the quotient $u^{-1}L$ of $L$ and 
$\bmB_j=\alpha(\bmB_u,a)$. 
Since $\bmB_i\subseteq\bmB_j$ and $\bmB_j=\alpha(\bmB_u,a)$, 
by the definition of $\beta$ we have 
$\bmB_i\in\beta(\bmB_u,a)$. 
Thus, $\bmB_i$ is reachable from $\bmB_0$ in $\fB$ by $ua$.

We also have to show that there is a word $w\in\Sig^*$, such that some 
final state of $\fB$ is reachable from $\bmB_i$ by $w$.
If $\bmB_i$ is final, then it is reachable from itself by $w=\eps$.
If $\bmB_i$ is not final, then let us consider any $\bA_k\in\bmB_i$. 
Since the right language of the state $\bA_k$ in the \'atomaton $\fA$ 
is not empty, and $\bA_k$ cannot be the final state of $\fA$, there must be 
some state $\bA_l$ of $\fA$ and some $a\in\Sig$, such that 
$\bA_l\in\alpha(\bA_k,a)$.
Now we know that there is some $\bmB_j$ such that $\bA_l\in\bmB_j$ and
$\alpha(\bmB_i,a)=\bmB_j$. Since $\beta(\bmB_i,a)$ is the collection
of all non-empty subsets of $\bmB_j$, it follows that 
$\{\bA_l\}\in\beta(\bmB_i,a)$.
Since the final state $\bA_{p-1}$ of $\fA$ is reachable from $\bA_l$ by 
any word $v\in A_l$, we get 
$\{\bA_{p-1}\}\in\beta(\bmB_i,av)$ by the definition of $\beta$. 
So a final state $\{\bA_{p-1}\}$ of $\fB$ is reachable from $\bmB_i$ by $av$.   
Thus, $\fB$ is trim.

To see that $\fB$ is a reduced atomic NFA, one verifies that 
Conditions~\ref{cond:in}--\ref{cond:out} of Theorem~\ref{thm:unions} hold.
Thus by Theorem~\ref{thm:unions}, $\fB$ is a trim reduced atomic NFA of $L$.
\qed
\end{proof}

\begin{theorem}[NFA with $2^p-1$ states]
\label{thm:maximal}
A regular language $L$ has a trim reduced atomic NFA with $2^p-1$ states if and 
only if for some quotient $K_i$ of $L$, $K_i= A_0 \cup \cdots \cup A_{p-1}$.
\end{theorem}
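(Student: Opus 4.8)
The plan is to reduce everything to the characterization of the largest reduced atomic NFA in Theorem~\ref{thm:subsets} together with the bookkeeping result Lemma~\ref{lem:beta}. The starting point is the counting observation recorded before Theorem~\ref{thm:unions}: every state of a trim reduced atomic NFA of $L$ is a distinct non-empty set of positive atom symbols, and there are exactly $2^p-1$ such sets. Hence every trim reduced atomic NFA has at most $2^p-1$ states, and it has exactly $2^p-1$ if and only if every non-empty subset of $\{\bA_0,\ldots,\bA_{p-1}\}$ occurs as a state; in particular the full set $\bmB_\ast=\{\bA_0,\ldots,\bA_{p-1}\}$ must be a state.

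For the ``if'' direction I would argue as follows. Suppose $K_i=A_0\cup\cdots\cup A_{p-1}$ for some quotient $K_i$ of $L$. Then the set of atom symbols $\{\bA_h\mid A_h\subseteq K_i\}$ is all of $\{\bA_0,\ldots,\bA_{p-1}\}$, so the collection $\cB$ of Theorem~\ref{thm:subsets} --- consisting of all non-empty subsets of the atom-symbol-sets of the quotients --- contains every non-empty subset of $\{\bA_0,\ldots,\bA_{p-1}\}$, giving $|\cB|=2^p-1$. By Theorem~\ref{thm:subsets} this $\cB$ is the state set of a trim reduced atomic NFA of $L$, which therefore has exactly $2^p-1$ states.

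For the converse, let $\fB=(\cB,\Sig,\beta,\cB_I,\cB_F)$ be a trim reduced atomic NFA with $2^p-1$ states. By the counting above, $\bmB_\ast\in\cB$, and since $\fB$ is trim this state is reachable, say $\bmB_\ast\in\beta(\cB_I,w)$ for some $w\in\Sig^*$. The key step is to track which atom symbols $w$ reaches. Extending Lemma~\ref{lem:beta} from a single state to the set $\cB_I$ and using Condition~\ref{cond:in} of Theorem~\ref{thm:unions}, I get $S(\beta(\cB_I,w))=\alpha(S(\cB_I),w)=\alpha(\bmA_I,w)$. Moreover $\alpha(\bmA_I,w)$ is exactly the set of atom symbols of the atoms contained in $w^{-1}L$, because the right language of the state set $\alpha(\bmA_I,w)$ in the \'atomaton is $w^{-1}L$ and the atoms are pairwise disjoint. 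From $\bmB_\ast\in\beta(\cB_I,w)$ we obtain $\bmB_\ast\subseteq S(\beta(\cB_I,w))=\alpha(\bmA_I,w)$, so every positive atom is contained in $w^{-1}L$. Since no quotient contains the negative atom, $w^{-1}L$ is a union of positive atoms, and therefore $w^{-1}L=A_0\cup\cdots\cup A_{p-1}$. Taking $K_i=w^{-1}L$ finishes this direction.

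The main obstacle is the converse: one must turn the purely combinatorial fact that $\bmB_\ast$ is a reachable state into a statement about an actual quotient of $L$. This is where the identification of $\alpha(\bmA_I,w)$ with the atoms of $w^{-1}L$ and the fact that quotients never contain the negative atom do the real work; the ``if'' direction and the initial counting are routine once Theorem~\ref{thm:subsets} is available.
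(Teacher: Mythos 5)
Your proof is correct and follows essentially the same route as the paper: the counting argument forces the full set $\{\bA_0,\ldots,\bA_{p-1}\}$ to be a state, and the ``if'' direction is exactly the paper's appeal to Theorem~\ref{thm:subsets}. The only difference is cosmetic: where you re-derive, via reachability, Lemma~\ref{lem:beta} and Condition~\ref{cond:in} of Theorem~\ref{thm:unions}, that the union of all positive atoms lies inside the quotient $w^{-1}L$, the paper simply invokes the general fact that the right language of any state of a trim NFA is contained in some quotient and that every quotient is a union of positive atoms.
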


\begin{proof}
Let $\fB=(\cB,\Sig,\beta,\cB_I,\cB_F)$ be a trim reduced atomic NFA of $L$ 
with $2^p-1$ states. Then there must be a state $\bmB_i$ of $\fB$ such that
$\bmB_i=\{\bA_0,\ldots,\bA_{p-1}\}$. Since the right language of any state 
of a trim NFA is a subset of some quotient, we have
$L_{\bmB_i,\cB_F}(\fB)=A_0\cup\cdots\cup A_{p-1}\subseteq K_i$ for some
quotient $K_i$ of $L$.
On the other hand, $K_i$ must be a union of some positive atoms, 
so we get $K_i= A_0 \cup \cdots \cup A_{p-1}$.

Conversely, let $K_i= A_0 \cup \cdots \cup A_{p-1}$ be a quotient of $L$
which includes all the positive atoms of $L$. Then 
by Theorem~\ref{thm:subsets}, there is a trim reduced atomic NFA of $L$
in which the state set is the collection of all non-empty subsets of the set 
of positive atom symbols. This NFA has $2^p-1$ states. 
\qed
\end{proof}

The construction of reduced atomic NFA's is illustrated in the following example.
To simplify the notation, we do not use atom symbols in examples.

\begin{example}
\label{ex:reducedatomic}
Consider the minimal DFA $\fD$ taken from~\cite{KaWe70} and  shown in 
Table~\ref{tab:dkw}.
It accepts the language $L=\Sig^*(b\cup aa) \cup a$, and its quotients are
$K_0=\eps^{-1}L=L$, 
$K_1=a^{-1}L=\Sig^*(b\cup aa) \cup a \cup \eps$, and
$K_2=b^{-1}L=\Sig^*(b\cup aa) \cup \eps$.
NFA $\fD^{\rev\deter\rev\trim}$ and the isomorphic trim \'atomaton $\fA^\trim$ with states renamed  are shown in Tables~\ref{tab:drdrkw} and~\ref{tab:akw}.
The positive atoms are
$A=\Sig^*(b\cup aa)$, $B=a$ and $C=\eps$, and
$K_0=A\cup B$, 
$K_1=A\cup B\cup C$,
and $K_2=A \cup C$.

\begin{table}[b]
\begin{minipage}[b]{0.25\linewidth}
\caption{$\fD$.}
\label{tab:dkw}
\begin{center}
$
\begin{array}{|c| c||c| c| }    
\hline
& \ \  \ \ 
&\ \ a \ \ &\ \ b \ \ \\
\hline  
\rightarrow & 0
& \ 1 \ & \ 2 \ \\
\hline  
\leftarrow& 1
&  1  & 2 \\
\hline  
\leftarrow & 2
 &  0 &  2  \\
\hline  
\end{array}
$
\end{center}
\end{minipage}
\hspace{0.05cm}
\begin{minipage}[b]{0.25\linewidth}
\caption{$\fD^{\rev\deter\rev\trim}$.}
\label{tab:drdrkw}
\begin{center}
$
\begin{array}{|c| c||c| c| }    
\hline
& \ \  \ \ 
&\ \ a \ \ &\ \ b \ \   \\
\hline  
\leftarrow & 12
& &  \\
\hline  
\rightarrow & 01
&  \{12\} &   \\
\hline  
\rightarrow & \ 012 \
&  \{012,01\}  & \ \{012,12\} \  \\
\hline  
\end{array}
$
\end{center}
\end{minipage}
\hspace{1.5cm}
\begin{minipage}[b]{0.25\linewidth}
\caption{$\fA^ \trim$.}
\label{tab:akw}
\begin{center}
$
\begin{array}{|c| c||c| c| }    
\hline
& \ \  \ \ 
&\ \ a \ \ &\ \ b \ \   \\
\hline  
\leftarrow & C
& &  \\
\hline  
\rightarrow & B
&  \{C\} &   \\
\hline  
\rightarrow & \ A \
& \ \{A,B\} \ & \ \{A,C\} \  \\
\hline  
\end{array}
$
\end{center}
\end{minipage}
\end{table}

\begin{table}[hbt]
\begin{minipage}[b]{0.45\linewidth}
\caption{NFA $\fB_1$.}
\label{tab:b1}
\begin{center}
$
\begin{array}{|c| c||c| c| c|c|}    
\hline
& \ \  \ \ 
&\ \ a \ \ &\ \ b \ \   \\
\hline  
\rightarrow & \ \{A,B\} \
&  \ \{A,B\},\{A,C\} \  & \ \{A,C\} \    \\
\hline  
\leftarrow& \{A,C\}
&  \  \{A,B\} \  &  \{A,C \} \\
\hline  
\end{array}
$
\end{center}
\end{minipage}
\hspace{1cm}
\begin{minipage}[b]{0.45\linewidth}
\caption{Atomic NFA $\fB_2$.}
\label{tab:b2}
\begin{center}
$
\begin{array}{|c| c||c| c| }    
\hline
& \ \  \ \ 
&\ \ a \ \ &\ \ b \ \ \\
\hline  
\rightarrow & \{A,B\}
& \ \{A,B\},\{C\} \ & \ \{A,C\} \ \\
\hline  
\leftarrow& \ \{C\} \
&  & \\
\hline  
\leftarrow & \{A,C\}
 &  \{A,B\} &  \{A,C\}  \\
\hline  
\end{array}
$
\end{center}
\end{minipage}
\end{table}

\begin{table}[t]
\begin{minipage}[b]{0.45\linewidth}
\caption{A 5-state NFA.}
\label{tab:b3}
\begin{center}
$
\begin{array}{|c| c||c| c| }    
\hline
& \ \  \ \ 
&\ \ a \ \ &\ \ b \ \ \\
\hline  
\rightarrow & \{A\}
& \ \{A\},\{B\} \ & \ \{A,C\} \ \\
\hline  
\rightarrow& \ \{B\} \
& \{C\} & \\
\hline  
\leftarrow & \{A,C\}
 &  \{A,B\} &  \{A,C\}  \\
\hline  
\leftarrow& \ \{C\} \
&  & \\
 \hline  
 & \{A,B\}
 & \ \{A,B\},\{C\} \ &  \{A\},\{C\}  \\
\hline  
\end{array}
$
\end{center}
\end{minipage}
\hspace{.4cm}
\begin{minipage}[b]{0.45\linewidth}
\caption{A 7-state NFA.}
\label{tab:b4}
\begin{center}
$
\begin{array}{|c| c||c| c| c|c|}    
\hline
& \ \  \ \ 
&\ \ a \ \ &\ \ b \ \   \\
\hline  
\rightarrow & \ \{A\} \
&  \ \{A\},\{B\} \  &  \{A,C\}     \\
\hline  
\rightarrow& \{B\}
&  \{C\} &  \\
\hline  
\leftarrow& \{A,C\}
&  \  \{A,B\} \  &  \{A,C\} \\
\hline  
\leftarrow& \{C\}
&    &   \\
\hline  
\rightarrow& \{A,B\}
&   \{A,B,C\}, \{B,C\}  &  \{A,C\} \\
\hline  
\leftarrow& \{A,B,C\}
&    \{A,B,C\},\{B,C\}   &  \{A,C\} \\
\hline  
\leftarrow& \{B,C\}
&  \  \{C\} \  &   \\
\hline  
\end{array}
$
\end{center}
\end{minipage}
\end{table}

Since the set $\{A,B\}$ of initial atoms does not contain all positive atoms, no 1-state NFA exists.
\be
\item 
For the initial state we could pick one state $\{A,B\}$ with two atoms.  From there, the \'atomaton reaches 
$\{A,B,C\}$ under $a$, and  $\{A,C\}$ under $b$. 
        \be
        \item
If we pick $\{A,C\}$
as the second state,  we can cover $\{A,B,C\}$ by $\{A,B\}$ and 
$\{A,C\}$, as  in Table~\ref{tab:b1}.
Here the minimal atomic NFA is unique.
        \item
        We can also use $\{A,B,C\}$ as a state. Then we need $\{A,C\}$
        for the transition under $b$. This gives an NFA  isomorphic to the DFA of Table~\ref{tab:dkw}.
        \item
        We can use state $\{C\}$
        as shown in Table~\ref{tab:b2}.
        \ee
\item
We can pick two initial states, $\{A\}$ and $\{B\}$. 
        \be
        \item
        If we add $\{C\}$, this leads to the  \'atomaton of Table~\ref{tab:akw}.
        \item
        A 5-state solution is shown in Table~\ref{tab:b3}.
        \ee
\item
We can use three initial states, $\{A\}$, $\{B\}$ and $\{A,B\}$. 
        A 7-state NFA is shown in  Table~\ref{tab:b4}. This 
           is a largest possible reduced solution.\qedb
\ee

\end{example}

The number of minimal atomic NFA's can also be very large. 
\begin{example}
\label{ex:atomicminimal}
Let $\Sig=\{a,b\}$ and consider the language $L=\Sig^*a\Sig^*b\Sig^*=\Sig^*ab\Sig^*$.
The quotients of $L$ are $K_0=L$, $K_1=L\cup b\Sig^*$ and $K_2=\Sig^*$.
The quotient DFA of $L$ is shown in Table~\ref{tab:d}, and its \'atomaton, in Tables~\ref{tab:a} and~\ref{tab:a_relabel} (where the atoms have been relabelled). 
The atoms  are $A=L$, $B=b^*ba^*$ and $C=a^*$, and there is no negative atom.
Thus the quotients are $K_0=L=A$, $K_1=A\cup B$, and $K_2=A\cup B\cup C$.

We find all the minimal atomic NFA's of $L$.
Obviously, there is no 1-state solution.
The states of any atomic NFA are sets of atoms, and 
there are seven non-empty sets of atoms to choose from. 
Since there is only one initial atom, there is no choice: we must take $\{A\}$.
For the transition $(A,a,\{A,B\})$, we can add $\{B\}$ or $\{A,B\}$. 
If there are only two states, atom $\{C\}$ cannot be reached. So there is no  2-state atomic NFA.
The results for 3-state atomic NFA's  are summarized in Proposition~\ref{prop:281}.

\begin{table}[hbt]
\begin{minipage}[b]{0.3\linewidth}
\caption{DFA $\fD$.}
\label{tab:d}
\begin{center}
$
\begin{array}{|c|c|| c|c|}    
\hline
 & & \  a \
& \ b \  \\
\hline
\hline
\rightarrow & 0 & 1
& 0  \\
\hline  
 & 1 & 
1 &    2  \\
\hline  
\leftarrow & \ 2 \ & \ 2 \
& \ 2 \ \\
\hline  
\end{array}
$
\end{center}
\end{minipage}
\hspace{0.1cm}
\begin{minipage}[b]{0.3\linewidth}
\caption{\'Atomaton $\fA$.}
\label{tab:a}
\begin{center}
$
\begin{array}{|c| c||c| c| }    
\hline
& \ \  \ \ 
&\ \ a \ \ &\ \ b \ \   \\
\hline  
\leftarrow & 2
& \{2 \} &  \\
\hline  
 & 12
&   & \{12,2\}  \\
\hline  
\rightarrow & \ 012 \
&  \{012,12\}  & \ \{012\} \  \\
\hline  
\end{array}
$
\end{center}
\end{minipage}
\hspace{0.3cm}
\begin{minipage}[b]{0.3\linewidth}
\caption{$\fA$ relabelled.}
\label{tab:a_relabel}
\begin{center}
$
\begin{array}{|c| c||c| c| }    
\hline
& \ \  \ \ 
&\ \ a \ \ &\ \ b \ \   \\
\hline  
\leftarrow & C
& \{C\} &  \\
\hline  
 & B
&   &  \ \{B,C\} \ \\
\hline  
\rightarrow & \ A \
& \ \{A,B\} \ & \ \{A\} \  \\
\hline  
\end{array}
$
\end{center}
\end{minipage}
\end{table}

\begin{table}[hbt]
\begin{minipage}[b]{0.45\linewidth}
\caption{NFA $\fN_2$.}
\label{tab:fn1}
\begin{center}
$
\begin{array}{|c| c||c| c| }    
\hline
& \ \  \ \ 
&\ \ a \ \ &\ \ b \ \ \\
\hline  
\rightarrow & A
& \ AB \ & \ A \ \\
\hline  
 & AB
&  AB  & AB,C \\
\hline  
\leftarrow & \ C \
 & \ C \ & \  \ \\
\hline  
\end{array}
$
\end{center}
\end{minipage}
\hspace{0.2cm}
\begin{minipage}[b]{0.45\linewidth}
\caption{NFA $\fN_9$.}
\label{tab:fn9}
\begin{center}
$
\begin{array}{|c| c||c| c| c|c|}    
\hline
& \ \  \ \ 
&\ \ a \ \ &\ \ b \ \   \\
\hline  
\rightarrow & \ A \
&  \ A,AB \  &  \  A  \   \\
\hline  
 & AB
&  \  A,AB \  & \ A,AB,C \ \\
\hline  
\leftarrow & C
&   \ C  \  &      \\
\hline  
\end{array}
$
\end{center}
\end{minipage}
\end{table}

\begin{proposition} 
\label{prop:281}
The language $\Sig^*ab\Sig^*$ has 281 minimal atomic NFA's. 
\end{proposition}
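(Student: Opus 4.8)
The plan is to reduce the count to an enumeration governed by the Legality Theorem (Theorem~\ref{thm:unions}) and then to count transition functions by counting ``covers''. A minimal NFA is trim and reduced, and the minimal NFA's of $L$ have three states (the \'atomaton $\fA$ of Table~\ref{tab:a_relabel} is one such); since no atomic NFA has fewer than three states, a minimal atomic NFA is exactly a trim reduced atomic NFA with three states. By Theorem~\ref{thm:unions} such an NFA $\fB=(\cB,\Sig,\beta,\cB_I,\cB_F)$ is determined by a state set $\cB$ of three non-empty sets of the atoms $A,B,C$ together with a transition function $\beta$, subject to: Condition~\ref{cond:in} forces $S(\cB_I)=\bmA_I=\{A\}$, so that $\{A\}$ must be a state and is the unique initial state; Condition~\ref{cond:out} forces the final states to be exactly those containing $C$, the unique final atom; and Condition~\ref{cond:trans} forces $\beta(\bmB,x)$ to be a sub-collection of $\cB$ whose union of atoms equals $\alpha(\bmB,x)$, where $\alpha$ is read off the \'atomaton.

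Next I would determine the admissible state sets. Since $\alpha(\{A\},a)=\{A,B\}$, some state contained in $\{A,B\}$ must supply $B$, so $\cB$ must contain $\{B\}$ or $\{A,B\}$; and since $L\neq\emp$, some state must contain $C$. As no single set plays both roles, the two states other than $\{A\}$ consist of one ``$B$-supplier'' ($\{B\}$ or $\{A,B\}$) and one ``$C$-state'' ($\{C\}$, $\{A,C\}$, $\{B,C\}$, or $\{A,B,C\}$), giving $2\times 4=8$ candidate sets. For each I would test \emph{coverability}: a target $T=\alpha(\bmB,x)$ is realizable only if the union of the states of $\cB$ contained in $T$ equals $T$. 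This eliminates four of the eight sets: when $\{B\}$ is a state together with a $C$-state $\not\subseteq\{B,C\}$ (namely $\{A,C\}$ or $\{A,B,C\}$), the target $\{B,C\}=\alpha(\{B\},b)$ is uncoverable; and whenever $\{B,C\}$ is a state, the target $\{C\}=\alpha(\{B,C\},a)$ is uncoverable because $\{C\}$ is absent.

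For each of the four surviving state sets I would count the valid $\beta$ as the product, over the six pairs $(\bmB,x)$, of the number of sub-collections of $\cB$ with union exactly $\alpha(\bmB,x)$ (an empty target contributing exactly one choice). The expected contributions are $1$ for $\{\{A\},\{B\},\{C\}\}$ (the \'atomaton itself), $8$ for $\{\{A\},\{A,B\},\{C\}\}$, $16$ for $\{\{A\},\{A,B\},\{A,C\}\}$, and $256$ for $\{\{A\},\{A,B\},\{A,B,C\}\}$, summing to $1+8+16+256=281$. Finally I would confirm trimness: in every surviving case the forced covers guarantee reachability (for instance $\{A,B\}$ is always reached from $\{A\}$ under $a$, and the $C$-state is then always reached under $b$, since $C$ appears in a unique available state), so each counted NFA is trim and hence minimal, and distinct $\beta$'s give distinct automata.

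The main obstacle is the bookkeeping in the two verifications that make the count exact rather than a mere upper bound: checking coverability of every target for every candidate state set, so as to discard precisely the four degenerate sets, and checking that \emph{every} admissible transition assignment yields a reachable, hence trim, NFA, so that no counted automaton must later be discarded and none is double-counted. The cover-counting itself reduces to a short inclusion--exclusion over the at most three atoms of each target and is routine once the admissible state sets are fixed.
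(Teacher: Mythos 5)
Your proposal is correct and follows essentially the same route as the paper: both arguments force the state $\{A\}$, reduce to the same four admissible state sets $\{\{A\},\{B\},\{C\}\}$, $\{\{A\},\{A,B\},\{C\}\}$, $\{\{A\},\{A,B\},\{A,C\}\}$, $\{\{A\},\{A,B\},\{A,B,C\}\}$, and count $1+8+16+256=281$. Your product-over-targets formula (number of sub-collections with union exactly $\alpha(\bmB,x)$) is just a cleaner formalization of the paper's ``minimal transitions plus independently addable ones'' bookkeeping.
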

\begin{proof}
We concentrate on 3-state solutions. 
We drop the curly brackets and commas and represent sets of atoms by words. Thus $\{A,AB
,BC\}$ stands for 
$\{ \{A\}, \{A,B\},\{B,C\} \}$.

State $A$ is the only initial state and so it must be included.
To implement the transition
$(A,a,\{A,B\})$ from $\fA$,
either $B$ or $AB$ must be chosen. 
\be
\item
If $B$ is chosen, then there must be a set containing $C$ but not $A$; otherwise 
the transition 
 $(B,b,\{B,C\})$ cannot be realized.
        \be
        \item
        If $BC$ is taken, then $C$ must be taken, and this would make four states.
        \item
         Hence $C$ must be chosen, giving states $A$, $B$, and $C$.
         This yields the \'atomaton $\fA=\fN_1$.
         \ee
\item
If $AB$ is chosen, then we could choose $C$, $AC$ or $ABC$, since $BC$ would also require
 $C$. Thus there are three cases:
        \be
        \item
        $\{A,AB,C\}$ yields $\fN_2$ of Table~\ref{tab:fn1}, if the minimal number of 
        transitions is used. 
        The following transitions can also be added: $(A,a,A)$, $(AB,a,A)$, $(AB,b,A)$.
        Since these can be added independently, we have eight more NFA's. 
        Using the maximal number of transitions, we get $\fN_9$ of Table~\ref{tab:fn9}.
        \item
        $\{A,AB,AC\}$ results in $\fN_{10}$ with the minimal number of transitions, and 
        $\fN_{25}$ with the maximal one.
        \item
        $\{A,AB,ABC\}$ results in $\fN_{26}$ (the quotient DFA) with the minimal number 
of transitions, and  $\fN_{281}$ with the maximal one.
        \ee
\ee

\begin{table}[hbt]
\begin{minipage}[b]{0.45\linewidth}
\caption{NFA $\fN_{10}$.}
\label{tab:fn10}
\begin{center}
$
\begin{array}{|c| c||c| c| }    
\hline
& \ \  \ \ 
&\ \ a \ \ &\ \ b \ \ \\
\hline  
\rightarrow & A
& \ AB \ & \ A \ \\
\hline  
 & AB
&  AB  & \  AB,AC \ \\
\hline  
\leftarrow & \ AC \
 & \ AB,AC \ & \ A \ \\
\hline  
\end{array}
$
\end{center}
\end{minipage}
\hspace{0.2cm}
\begin{minipage}[b]{0.45\linewidth}
\caption{NFA $\fN_{25}$.}
\label{tab:fn25}
\begin{center}
$
\begin{array}{|c| c||c| c| c|c|}    
\hline
& \ \  \ \ 
&\ \ a \ \ &\ \ b \ \   \\
\hline  
\rightarrow & \ A \
&  \ A,AB \  &  \  A  \   \\
\hline  
 & AB
&  \  A,AB \  & \ A,AB,AC \ \\
\hline  
\leftarrow & AC
&   \ A,AB,AC  \  &   A   \\
\hline  
\end{array}
$
\end{center}
\end{minipage}
\end{table}

\begin{table}[h]
\begin{minipage}[b]{0.45\linewidth}
\caption{NFA $\fN_{26}$.}
\label{tab:fn26}
\begin{center}
$
\begin{array}{|c| c||c| c| }    
\hline
& \ \  \ \ 
&\ \ a \ \ &\ \ b \ \ \\
\hline  
\rightarrow &A
& \ AB \ & \ A \ \\
\hline  
 & AB
&  AB  & \  ABC \ \\
\hline  
\leftarrow & \ ABC \
 & \ ABC \ & \ ABC \ \\
\hline  
\end{array}
$
\end{center}
\end{minipage}
\hspace{0.2cm}
\begin{minipage}[b]{0.45\linewidth}
\caption{NFA $\fN_{281}$.}
\label{tab:fn281}
\begin{center}
$
\begin{array}{|c| c||c| c| c|c|}    
\hline
& \ \  \ \ 
&\ \ a \ \ &\ \ b \ \   \\
\hline  
\rightarrow & \ A \
&  \ A,AB \  &  \ A  \   \\
\hline  
 & AB
&  \  A,AB \  & \ A,AB,ABC \ \\
\hline  
\leftarrow & ABC
&   \ A,AB,ABC \  &   A,AB,ABC   \\
\hline  
\end{array}
$
\end{center}
\end{minipage}
\end{table}
\begin{table}[htb]
\caption{NFA $\fN_{282}$.}
\label{tab:na282}
\begin{center}
$
\begin{array}{|c|c|| c|c|}    
\hline
 & & \  a \
& \ b \  \\
\hline
\hline
\rightarrow & 0 
& 1 & 0  \\
\hline  
 & 1 
 & 1 & \  0,1,2 \ \\
\hline  
\leftarrow & \ 2 \ 
& \ 0,2 \ &  \\
\hline  
\end{array}
$
\end{center}
\end{table}

As well, $L$ has 3-state non-atomic NFA's.
The determinized version of NFA $\fN_{10}$ of Table~\ref{tab:fn10} is not minimal.
By Theorem~\ref{thm:atomic}, $\fN_{10}^\rev$ is not atomic. But $L^R=\Sig^*ba\Sig^*$;
hence we obtain a non-atomic 3-state NFA for $L$ by reversing $\fN_{10}$ and interchanging $a$ and $b$.
That NFA with renamed states is shown in Table~\ref{tab:na282}.

The right languages of the states of $\fN_{282}$ are:
$L_0=L=A$, 
$L_1=A\cup B$, and
$L_2=\eps \cup a  \cup aa\Sig^*\cup abb^*aa^*b\Sig^*$, which is not a union of atoms.
Six more non-atomic NFA's can be derived from NFA's between $\fN_{10}$ and  $\fN_{25}$.
\qed
\end{proof}

This is a rather large number of NFA's for a language with  3 quotients. 
\qedb
\end{example}

One can verify that there is no NFA with fewer than 3 states which
accepts the language $L=\Sig^*ab\Sig^*$.
This implies that every minimal atomic NFA of $L$ is also 
a minimal NFA of $L$.
However, this is not the case with all regular languages, as we will see 
in the next section.

\section{Sengoku's NFA Minimization Method}
\label{sec:Sengoku}

Sengoku had no concept of atom, but he came very close to discovering it.
For a language accepted by a minimal DFA $\fD$, the \emph{normal} 
NFA~\cite{Sen92}(p.~18) is isomorphic to $\fD^{\rev\deter\rev\trim}$, and hence to 
the trim \'atomaton, by our Corollary~\ref{cor:isomorphism}. 
Moreover, he defines an NFA $\fN$ to be in \emph{standard form}~\cite{Sen92}(p.~19) 
if $\fN^{\rev\deter}$ is minimal.
By our Theorem~\ref{thm:atomic}, such an $\fN$ is atomic.
Sengoku makes the following claim~\cite{Sen92}(p.~20):
\begin{quote}
\vskip-0.1cm
\emph{We can transform the nondeterministic automaton into its standard form 
by adding some extra transitions to the automaton. Therefore the number of 
states is unchangeable.}
\end{quote}
\vskip-0.1cm
This claim amounts to stating that any NFA can be transformed to an equivalent  
atomic NFA by adding some transitions. Unfortunately, the claim is false:
\begin{theorem}
\label{thm:Sengoku}
There exists a language for which no minimal NFA is atomic.
\end{theorem}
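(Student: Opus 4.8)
The plan is to refute Sengoku's claim by exhibiting a single regular language $L$ all of whose \emph{minimal} NFA's are forced to be non-atomic. The logical skeleton is this: the atomic NFA's of $L$ form a subclass of all NFA's of $L$, so a minimal atomic NFA is never smaller than a minimal NFA; it therefore suffices to produce an $L$ for which it is \emph{strictly} larger. Concretely I would fix an integer $n_0$ and establish three facts: (i) $L$ has an NFA with $n_0$ states; (ii) no NFA of $L$ has fewer than $n_0$ states; and (iii) every \emph{atomic} NFA of $L$ needs at least $n_0+1$ states. Items (i) and (ii) pin the minimal NFA size at exactly $n_0$, and (iii) then forces every $n_0$-state (hence minimal) NFA to be non-atomic, which is precisely the assertion.

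For the construction I would present $L$ by its minimal DFA, list its quotients $K_0,\ldots,K_{n-1}$, and compute the positive atoms together with the trim \'atomaton $\fA^\trim$; these are the data on which Theorems~\ref{thm:unions} and~\ref{thm:atomic} operate. I would then give an explicit $n_0$-state NFA $\fN$ accepting $L$ and verify directly that it is non-atomic, either by exhibiting a state whose right language is not a union of atoms, or, via Theorem~\ref{thm:atomic}, by checking that $\fN^{\rev\deter}$ is not isomorphic to the minimal DFA of $L^R$. The lower bound (ii) is the routine half: a fooling set of size $n_0$, \ie a family of pairs $(u_i,v_i)$ with $u_iv_i\in L$ for all $i$ but $u_iv_j\notin L$ for some choice whenever $i\neq j$, shows that no NFA of $L$ can have fewer than $n_0$ states.

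The heart of the argument, and the main obstacle, is (iii): proving that \emph{no} atomic NFA of $L$ attains $n_0$ states. Since trimming and reducing an atomic NFA keep it atomic and do not increase the state count, it is enough to rule out trim reduced atomic NFA's with at most $n_0$ states, and for these Theorem~\ref{thm:unions} applies verbatim. The problem thus becomes purely combinatorial: every such NFA is a collection $\cB$ of non-empty sets of positive atom symbols subject to the three legality conditions $S(\cB_I)=\bmA_I$, $S(\beta(\bmB_i,a))=\alpha(\bmB_i,a)$ for all $a\in\Sig$, and the finality rule. I would enumerate, up to the freedom these conditions allow, all candidate collections with $|\cB|\le n_0$ and show each one violates some condition --- typically, covering the $\alpha$-image of a state under condition~\ref{cond:trans} forces the presence of an atom-set that cannot itself be realized without spending an extra state. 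Designing $L$ so that its atom/quotient incidence structure makes this covering obstruction unavoidable is the whole point; the expected difficulty is keeping the case analysis finite and transparent, which is why $L$ should be chosen as small as possible while still exhibiting the gap.

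Finally I would assemble the pieces: by (i) and (ii) the minimal NFA size of $L$ is $n_0$, and by (iii) every atomic NFA has at least $n_0+1$ states, so a minimal NFA of $L$ cannot be atomic. As an independent sanity check on (iii) that avoids the atom-set bookkeeping, Theorem~\ref{thm:atomic} furnishes a second route: one argues instead that for every $n_0$-state NFA $\fN$ of $L$ the subset construction applied to $\fN^{\rev}$ yields strictly more reachable states than the minimal DFA of $L^R$ possesses, so $\fN^{\rev\deter}$ is non-minimal and $\fN$ is therefore non-atomic.
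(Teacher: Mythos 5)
Your strategy is exactly the one the paper uses: pick a concrete language $L$, pin its minimal NFA size at some $n_0$ by exhibiting an $n_0$-state NFA and a matching lower bound, and then show by a combinatorial analysis of the legal collections of atom-sets (Theorem~\ref{thm:unions}) that every atomic NFA of $L$ needs at least $n_0+1$ states. The paper does precisely this with the Matz--Potthoff example: a $9$-quotient, $6$-atom language with a non-atomic $4$-state minimal NFA, for which the quotients $\{B,D,F\}$, $\{C,E,F\}$, $\{D,E,F\}$ force three specific states and the atom $A$ then forces a fourth state that cannot close under the transition conditions without a fifth.

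The gap is that you never produce the witness. For an existential statement the example \emph{is} the proof, and your write-up stops exactly where the work begins: you say ``I would fix an integer $n_0$,'' ``I would present $L$ by its minimal DFA,'' and, most tellingly, ``Designing $L$ so that its atom/quotient incidence structure makes this covering obstruction unavoidable is the whole point'' --- and then you do not design it. None of the three facts (i)--(iii) is actually established for any concrete $L$: no DFA or quotient list is given, no fooling set is exhibited for the lower bound, and the case analysis ruling out $n_0$-state atomic NFA's is described in the abstract but not carried out. Finding a language where the gap between minimal NFA size and minimal \emph{atomic} NFA size actually opens up is nontrivial (as the paper's own Example~\ref{ex:atomicminimal} shows, for many small languages every minimal atomic NFA is already minimal), so this cannot be waved through as routine. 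What you have is a correct and well-organized proof schema, matching the paper's, but with the essential mathematical content left as a placeholder.
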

\begin{proof}
\vskip-0.1cm
This example is from~\cite{MaPo95}. 
A quotient DFA $\fD$, the NFA $\fD^{\rev\deter\rev}$, and its isomorphic \'atomaton 
$\fA$ with relabelled states are  in Tables~\ref{tab:d_mp}--\ref{tab:a_mp}, 
respectively (there is no negative atom). We  now drop the curly brackets and commas in tables, and 
represent sets of atoms by words.
A minimal NFA $\fN_{min}$ of this language, having four states, is shown in 
Table~\ref{tab:n_mp}; it is not atomic and it is not unique. 
We try to construct a 4-state atomic NFA $\fN_{atom}$ equivalent to $\fD$. 

\vskip-0.5cm
\begin{table}[hbt]
\begin{minipage}[b]{0.25\linewidth}
\caption{$\fD$.}
\label{tab:d_mp}
\begin{center}
$
\begin{array}{|c|c|| c|c|}    
\hline
 & & \  a \
& \ b \  \\
\hline
\hline
\rightarrow & 0 
& 1 & 2  \\
\hline  
 & 1 
 &  3 &    4  \\
\hline  
\leftarrow & \ 2 \ 
& \ 5 \ & \ 4 \ \\
\hline  
 & 3 & 
3 &    1  \\
\hline  
 & 4 & 
6 &    2  \\
\hline  
\leftarrow & 5 & 
7 &    2  \\
\hline  
 & 6 & 
3 &    8  \\
\hline  
\leftarrow & 7 & 
7 &    7  \\
\hline  
 & 8 & 
6 &    7  \\
\hline  
\end{array}
$
\end{center}
\end{minipage}
\hspace{0.05cm}
\begin{minipage}[b]{0.35\linewidth}
\caption{$\fD^{\rev\deter\rev}$.}
\label{tab:drdr_mp}
\begin{center}
$
\begin{array}{|c| c||c| c| }    
\hline
& \ \  \ \ 
&\ \ a \ \ &\ \ b \ \   \\
\hline  
\leftarrow & 257
& 257,04578 &  \\
\hline  
\rightarrow & 04578
&  12678 &  \ 257 \ \\
\hline  
 & 12678 
& \  \ & 04578,03-8   \\
\hline  
\rightarrow & 03-8
&   &  \ 12678 \ \\
\hline  
 & 1-8
 &  03-8 &   \\
\hline  
\rightarrow & 0-8
&  1-8,0-8 &  \ 1-8,0-8 \ \\
\hline  
\end{array}
$
\end{center}
\end{minipage}
\hspace{0.8cm}
\begin{minipage}[b]{0.3\linewidth}
\caption{ $\fA$.}
\label{tab:a_mp}
\begin{center}
$
\begin{array}{|c| c||c| c| }    
\hline
& \ \  \ \ 
&\ \ a \ \ &\ \ b \ \   \\
\hline  
\leftarrow & A
& AB &  \\
\hline  
\rightarrow & B
&  C & A  \\
\hline  
 & \ C \
&   & \ BD \  \\
\hline  
\rightarrow & \ D \
&\  \ & C \\
\hline  
 & \ E \
& D  & \  \  \\
\hline  
\rightarrow & \ F \
&  EF  &  EF \\
\hline  
\end{array}
$
\end{center}
\end{minipage}
\end{table}
\vskip-0.4cm
First, we note that quotients corresponding to the states of $\fD$ can be expressed 
as sets of atoms as follows:
$K_0=\{B,D,F\}$, $K_1=\{C,E,F\}$, $K_2=\{A,C,E,F\}$, $K_3=\{D,E,F\}$,
$K_4=\{B,D,E,F\}$, $K_5=\{A,B,D,E,F\}$, $K_6=\{C,D,E,F\}$, $K_7=\{A,B,C,D,E,F\}$, and
$K_8=\{B,C,D,E,F\}$. One can verify that these are the states of the determinized 
version of the \'atomaton, which is isomorphic to the original DFA $\fD$. 
Now, every state of $\fN_{atom}$ must be a subset of a set of atoms of some quotient, 
and all these sets of atoms of quotients must be covered by the states of $\fN_{atom}$.
We note that quotients $\{B,D,F\}$, $\{C,E,F\}$, and $\{D,E,F\}$
do not contain any other quotients as subsets, while all the other quotients do.
It is easy to see that there is no combination of three or fewer sets of atoms, 
other than these three sets, that can cover these quotients. 
So we have to use these sets as states of $\fN_{atom}$. 
We also need at least one set containing the atom $A$. 
If we use only one set of atoms with $A$, that set has to be
a subset of every quotient having $A$. So it must be
a subset of $\{A,E,F\}$. If we use $\{A\}$ as a state, then by the transition 
table of the \'atomaton, there must be at least one more state to cover 
$\{A,B\}$. Similarly, if we use $\{A,E\}$, then we must have another state to cover 
$\{A,B,D\}$. If we use $\{A,F\}$, then we must have a state to cover 
$\{A,B,E,F\}$. And if we use $\{A,E,F\}$, then we must have a state to cover 
$\{E,F\}$. We conclude that a smallest atomic NFA has at least five states.
There is a five-state atomic NFA, as 
shown in Table~\ref{tab:n5_mp}. It is not unique. 

Since there does not exist a four-state atomic NFA equivalent to the DFA $\fD$,
it is not possible to convert the non-atomic 
minimal NFA $\fN_{min}$ to an atomic NFA by adding transitions.
\qed
\end{proof}

\begin{table}[t]
\vskip-0.5cm
\begin{minipage}[b]{0.3\linewidth}
\caption{NFA $\fN_{min}$.}
\label{tab:n_mp}
\begin{center}
$
\begin{array}{|c|c|| c|c|}    
\hline
 & & \  a \
& \ b \  \\
\hline
\hline
\rightarrow & 0 
& \ 1 \ & 1,2  \\
\hline  
 & 1 
 & 3 & \  0,3 \ \\
\hline  
\leftarrow & \ 2 \ 
& \ 0,2,3 \ &  \\
\hline  
 & \ 3 \ 
& \ 3 \ & 1 \\
\hline  
\end{array}
$
\end{center}
\end{minipage}
\hspace{0.5cm}
\begin{minipage}[b]{0.45\linewidth}
\caption{$\fN_{atom}$.}
\label{tab:n5_mp}
\begin{center}
$
\begin{array}{|c| c||c| c| }    
\hline
& \ \  \ \ 
&\ \ a \ \ &\ \ b \ \   \\
\hline  
\rightarrow & BDF
& CEF &CEF,AEF  \\
\hline  
 & \ CEF \
& DEF  & \ BDF,DEF \  \\
\hline  
\leftarrow & AEF
&  \ BDF, AEF, DEF \ & EF  \\
\hline  
 & \ DEF \
& DEF  & \ CEF \  \\
\hline  \hline
 & \ EF \
&\ DEF  \ & EF \\
\hline  
\end{array}
$
\end{center}
\end{minipage}
\vskip-0.3cm
\end{table}

\vskip-0.1cm
In summary, Sengoku's method cannot find the minimal NFA's in all cases. 
However, it is able to find all atomic minimal NFA's.
His minimization algorithm proceeds by 
``merging some states of the normal nondeterministic automaton.''
This is similar to our search for subsets of atoms that satisfy 
Theorem~\ref{thm:unions}.

\section{The Kameda-Weiner Minimization Method}
\label{sec:KW}

We present a short and modified outline of the properties of the Kameda-Weiner 
NFA minimization method~\cite{KaWe70} using mostly our terminology and notation. 
They consider a trim minimal DFA $\fD=(Q,\Sig,\delta,q_0,F)$ with $Q$ of 
cardinality $n$, and its reversed  determinized and trim version $\fD^{\rev\deter\trim}$; 
the set of states of $\fD^{\rev\deter\trim}$ is a subset $\cS$ of cardinality $p$ of 
$2^Q\setminus\emp$. 
They then 
form an $n\times p$ matrix $T$ where the rows correspond to non-empty states $q_i\in Q$ of $\fD$, 
which is the trim minimal DFA of a language $L$, 
and columns, to states $S_j\in \cS$ of $\fD^{\rev\deter\trim}$, 
which is the trim minimal DFA of the language $L^R$ by Theorem~\ref{thm:Brz}.
The entry $t_{i,j}$ of the matrix $T$ is 1 if $q_i\in S_j$, and 0 otherwise.

We use $\fD^{\rev\deter\rev\trim}$, the trim \'atomaton,  instead of $\fD^{\rev\deter\trim}$, 
since the state sets of these two automata are identical.  
Interpret the rows of the matrix as non-empty quotients of $L$ and columns, 
as positive atoms of $L$. Then $t_{i,j}=1$ if and only if quotient $K_i$ contains 
atom $A_j$, and it is clear that every regular language defines a 
unique such matrix, which we will refer to as the \emph{quotient-atom matrix}.

The ordered pair $(K_i,A_j)$ with $K_i\in \cK$ and $A_i\in \cA$ is a \emph{point} 
of $T$ if $t_{i,j}=1$.
A \emph{grid} $g$ of $T$ is the direct product $g=P\times R$ of a set $P$ of quotients with a set $R$ of atoms.
If $g=P\times R$ and $g'=P'\times R'$ are two grids  of $T$, 
then $g\subseteq g'$ if and only if $P\subseteq P'$ and $R\subseteq R'$.
Thus $\subseteq$ is a partial order on the set of all grids of $T$, 
and a grid is \emph{maximal} if it is not contained in any other grid.
A \emph{cover} $C$ of $T$ is a set $C=\{g_0,\ldots,g_{k-1}\}$ of grids,  
such that every point $(K_i,A_j)$ belongs to some grid $g_i$ in $C$.
A \emph{minimal cover}  has the minimal number
of grids.

Let $f:\cK\to 2^C\setminus \emp$ be the function that assigns to quotient 
$K_i\in \cK$ the set of grids $g=P\times R$ such that $K_i\in P$.
The NFA constructed by the Kameda-Weiner method is $\fN_C=(C,\Sig,\eta_C,C_I,C_F)$, 
where $C$ is a cover consisting of maximal grids, 
$C_I=f(K_{in})$ is the set of grids corresponding to the initial quotient
$K_{in}$, and $C_F$ is defined by $g\in C_F$ if and only if $g\in f(K_i)$ 
implies that $K_i$ is a final quotient.
For every grid $g=P\times R$ and $x\in\Sig$, we can compute 
$\eta_C(g,x)$ by the formula $\eta_C(g,x)=\bigcap_{K_i\in P} f(x^{-1}K_i)$.

It may be the case that $\fN_C$ is not equivalent to DFA $\fD$.
A cover $C$ is called \emph{legal} if $L(\fN_C)=L(\fD)$.
To find a minimal NFA of a language $L$,
the method in~\cite{KaWe70} 
tests the covers of the quotient-atom matrix of $L$ in the order of 
increasing size to see if they are legal. 
The first legal NFA is a minimal one.

When we apply the Kameda-Weiner method~\cite{KaWe70} to the  example in 
Theorem~\ref{thm:Sengoku}, we get the NFA of Table~\ref{tab:n_mp}.

We apply the Kameda-Weiner method~\cite{KaWe70} to the  example in Theorem~\ref{thm:Sengoku}.
The quotients in the example are referred to as the integers 0--8, as in Table~\ref{tab:d_mp}.
The atoms are those in Table~\ref{tab:drdr_mp} relabelled as in Table~\ref{tab:a_mp}. 
The quotient-atom matrix is shown in Table~\ref{tab:c_mp}, where the non-blank entries are to be interpreted as 1's and the blank entries as 0's. 
Table~\ref{tab:c_mp} also shows a minimal cover $S=(g_0,g_1,g_2,g_3)$ and $f(K_i)$ for each quotient $K_i$ of $\cK$.

\begin{table}[hbt]
\caption{Cover $C$ for quotient-atom matrix of $\fD$.}
\label{tab:c_mp}
\begin{center}
$
\begin{array}{|c|c|| c| c| c| c| c| c| c|}    
\hline
& & \  F  \
& \ E \  & \ D \ & \ C \ & \ B \ & \  A  \ & \  f(K_i)  \ \\
\hline
\hline
\rightarrow &\ 0 \ & g_0 &   & g_0 &  & g_0 &  & \{g_0\} \\
\hline  
&\ 1 \ & g_1  & g_1 &  & g_1 &  &  & \{g_1\} \\
\hline  
\leftarrow &\ 2 \ & g_1,g_2  & g_1,g_2 &  & g_1 &  & g_2 & \{g_1,g_2\} \\
\hline  
&\ 3 \ & g_3 & g_3 & g_3 &  &  &  & \{g_3\} \\
\hline  
&\ 4 \ & g_0,g_3  & g_3 & g_0,g_3 &  & g_0 &  & \{g_0,g_3\} \\
\hline  
\leftarrow &\ 5 \ & g_0,g_2,g_3  & g_2,g_3 & g_0,g_3 &  & g_0 & g_2 & \{g_0,g_2,g_3\} \\
\hline  
&\ 6 \ & g_1,g_3  & g_1,g_3 & g_3 & g_1 &  &  & \{g_1,g_3\} \\
\hline  
\leftarrow &\ 7 \ & \ g_0,g_1,g_2,g_3 \ & \ g_1,g_2,g_3 \ &\  g_0,g_3 \ & g_1 & g_0 & g_2 & \
\{g_0,g_1,g_2,g_3\} \ \\
\hline  
& \ 8 \ & g_0,g_1,g_3  & g_1,g_3 & g_0,g_3 & g_1 & g_0 &  & \{g_0,g_1,g_3\} \\
\hline  
\end{array}
$
\end{center}
\end{table}

The construction of the NFA $\fN_{min}$ is shown in Table~\ref{tab:con_mp}.
For each grid $g=P\times R$, we show its set of quotients $P$, with
$K_i\in P$ replaced by $i$.
For each input $x\in\Sig$, we give $x^{-1}P$, and then the intersection 
of the $f(K_i)$ for $K_i\in x^{-1}P$. 
For example, the set $P$ for $g_0$ is expressed as $\{0,4,5,7,8\}$,
the set of quotients $a^{-1}P$ of the set $P$ by $a$ is $\{1,6,7\}$, and 
$\eta_C(g_0,a)=f(1) \cap f(6) \cap f(7) = 
\{g_1\} \cap \{g_1,g_3\} \cap \{g_0,g_1,g_2,g_3\}= \{g_1\}$.
Table~\ref{tab:n_mp} shows the constructed NFA $\fN_{min}$, 
where $g_i$'s are replaced by $i$'s.
Since $\fN_{min}$ is equivalent to $\fD$,  $C$ is a legal cover.
However, $\fN_{min}$ is not atomic, since the right language of 
state $g_2$ is not a union of atoms, although it includes atoms $A$ and $E$ as
its subsets. The right languages of the other states of $\fN_{min}$ are sets of atoms:
$L(g_0)=B\cup D\cup F$, 
$L(g_1)=C\cup E\cup F$, and   
$L(g_3)=D\cup E\cup F$.

\begin{table}
\caption{Construction of NFA $\fN_{min}$.}
\label{tab:con_mp}
\begin{center}
$
\begin{array}{|c| c||c|| c| c|c|c|c|}    
\hline
& \ \ g \ \  & P
&\ \ a \ \ & a &\ \ b \ \  &b  \\
\hline  
 &  & 
 & a^{-1}P &\ \eta_C(g,a) \ & b^{-1}P &\ \eta_C(g,b) \\
\hline
\hline  
\rightarrow & g_0 &\{0,4,5,7,8\}
&  \{1,6,7\} & \{g_1\} & \{2,7\} &\{g_1,g_2\}\\
\hline  
& g_1 &\{1,2,6,7,8\} & \{3,5,6,7\}
&  \{g_3\} &  \{4,7,8\}  & \{g_0,g_3\} \\
\hline  
\leftarrow & g_2 &\{2,5,7\} & \{5,7\}
& \{g_0,g_2,g_3\}& \{2,4,7\} & \emp \\
\hline  
& g_3 &\{3,4,5,6,7,8\} & \{3,6,7\}
&  \{g_3\} &  \{1,2,7,8\}  & \{g_1\} \\
\hline  
\end{array}
$
\end{center}
\end{table}

We believe that NFA's defined by grids are a topic for future research.

\section{Conclusions}
\label{sec:conc}
We have studied the properties of atomic NFA's.
We have shown that atoms play an important role in NFA minimization and
proved that it is not enough to search for atomic NFA's only.


%

\end{document}